\newcommand{\update}[1]{#1}
\newcommand{\updateNew}[1]{#1}
\crefname{step}{step}{steps}
\Crefname{step}{Step}{Steps}
\newtheorem{theorem}{Theorem}
\newtheorem{remark}{Remark}
\newtheorem{definition}{Definition}
\newtheorem{lemma}{Lemma}
\newtheorem{assumption}{Assumption}
\DeclareMathOperator{\argmax}{arg \ max}
\title{Data-driven Koopman MPC using Mixed Stochastic-Deterministic Tubes}
\author{Zhengang Zhong\\
	Sargent Centre for Process Systems Engineering\\
	Imperial College London\\
    London, England\\
	\texttt{z.zhong20@imperial.ac.uk} \\
	\And
	Ehecatl Antonio del Rio-Chanona\\
	Sargent Centre for Process Systems Engineering\\
	Imperial College London\\
	London, England\\
	\texttt{a.del-rio-chanona@imperial.ac.uk} \\
	\And
	Panagiotis Petsagkourakis\\
	Sargent Centre for Process Systems Engineering\\
	Imperial College London\\
	London, England\\
	\texttt{p.petsagkourakis@imperial.ac.uk} \\
}
\begin{document}
\maketitle

\begin{abstract}
This paper presents a novel data-driven stochastic MPC design for discrete-time nonlinear systems with additive disturbances by leveraging the Koopman operator and a distributionally robust optimization (DRO) framework. By lifting the dynamical system into a linear space, we achieve a finite-dimensional approximation of the Koopman operator. We explicitly account for the modeling approximation and additive disturbance error by a mixed stochastic-deterministic tube for the lifted linear model. This ensures the regulation of the original nonlinear system while complying with the prespecified constraints. Stochastic and deterministic tubes are constructed using a DRO and a hyper-cube hull, respectively. We provide finite sample error bounds for both types of tubes. The effectiveness of the proposed approach is demonstrated through numerical simulations.
\end{abstract}

\keywords{Koopman operator \and Stochastic model predictive control}

\section{INTRODUCTION}
Model predictive control (MPC) is a control scheme that repeatedly solves optimization problems online based on a system model and prescribed constraints to decide optimal control actions \cite{mayne2000constrained, grune2017nonlinear}. 
When the system model is not acquirable through first principles, leveraging data-driven methods is a reliable alternative.
 Koopman operator theory enables obtaining a global linear representation of nonlinear dynamical systems from data \cite{kaiser2021data, mezic2005spectral}, which has attracted a lot of attention due to its benign properties. The Koopman operator acting on an observable function describes the evolution of this observable function along the trajectory of the original system in both deterministic and stochastic scenarios \cite{budivsic2012applied, wanner2022robust, vcrnjaric2020koopman}. Being a linear operator the Koopman operator has eigenvalues, eigenfunctions, and modes, which allow a low-dimensional interpretation of the original high-dimensional system \cite{brunton2016koopman}. Methods practically concerning  with a suitable finite-dimensional approximation for the Koopman
operator have been proposed to learn the leading Koopman eigenvalues, eigenfunctions, and modes directly from data, such as Dynamic Mode Decomposition (DMD) \cite{schmid2010dynamic}, Hankel-DMD \cite{arbabi2017ergodic}, noise-resistant DMD \cite{wanner2022robust} and Extended DMD (EDMD) \cite{folkestad2020extended, williams2015data, williams2014kernel}.
Other works include  \cite{korda2020optimal} introduces an approach to learn the invariant subspaces of the Koopman operator spanned by generalized eigenfunc-functions, and more recently,  \cite{khosravi2022representer} and \cite{kostic2022learning}  learn an invariant subspace of the corresponding Koopman operator by leveraging reproducing kernel.


The methods mentioned above are mainly applied to learn the Koopman operator corresponding to autonomous dynamical systems. However, for controlled dynamical systems, learning the corresponding Koopman operator is more involved. The Koopman operator with control inputs \cite{kaiser2021data} can be interpreted as an operator acting on the space observables,
in which each observable is an element dependent on 
the state and/or inputs \cite{proctor2018generalizing}. The paper \cite{korda2018linear} defines the Koopman operator associated with controlled systems as an operator evolving observables of the uncontrolled dynamical system on the extended state space. For such controlled nonlinear systems, the approximated Koopman operator serves as a linear predictor for the design of a linear MPC on the space of observables. For control affine systems, the corresponding lifted systems can be formulated as a bilinear control system in the Koopman space based on the Koopman Canonical Transform \cite{surana2016koopman}.
By regarding control inputs as a parameter of the Koopman operator, a family of Koopman operators corresponding to each input can be defined for the nonlinear control systems \cite{peitz2020data}. 

However, to control unknown nonlinear systems, factors such as - finite-dimensional approximation, finite samples, and process disturbances - lead to modeling errors \cite{khosravi2022representer}, which is important to consider so as
not to compromise control performance and safety.
To consider explicitly the non-negligible approximation errors and unknown additive disturbances, the works \cite{zhang2022robust, mamakoukas2022robust} apply robust MPC with the Koopman operator on unknown disturbed nonlinear systems. However, robust MPC could result in a conservative closed-loop performance of systems or infeasibility.

Hence we propose a mixed-tube-based data-driven stochastic MPC using the Koopman operator. The reason for using a mixed tube approach is that it allows us to address the deterministic uncertainties (those attributed to the approximation error of listing the system), and the stochastic uncertainty (that inherent to a stochastic dynamic system) \cite{paulson2019mixed}. The deterministic tube part is estimated from the data collected from the evolution of the original and lifted systems, and the stochastic tube part is learned via a DRO approach using the Wasserstein ambiguity set.

Recently, there has been a surge of interest in  distributionally robust control utilizing the Wasserstein ambiguity set. To satisfy state constraints, recent publications \cite{mark2020stochastic,  zhong2021data, coulson2021distributionally, micheli2022data, fochesato2022data, zhong2023tube} address distributionally robust MPC problems associated with the Wasserstein ambiguity set. In these studies, the center of the ambiguity set is determined based on independent and identically distributed (i.i.d.) samples of disturbance sequences. However, due to the inherent characteristics of the equivalent reformulation of Type-1 Wasserstein Distributionally Robust Optimization (DRO), constructing stochastic tubes online in all aforementioned works leads to a substantial increase in computational complexity when the sample size is large. In this paper, we significantly reduce the computational complexity of the online problem by constructing the tube associated with stochastic state constraints offline. Additionally, this approach ensures desirable control properties such as recursive feasibility.


Contribution: \update{In this paper, we propose a general data-driven control framework for nonlinear systems based on Koopman operator theory. The approach combines Koopman-based modeling with a computationally efficient distributionally robust MPC scheme that ensures a balance between tractability, performance, and probabilistic feasibility.  From the theoretical perspective, our approach enables the design of an MPC controller based on measured data using Koopman operators, whilst guaranteeing recursive feasibility and closed-loop robustness in probability (Theorem \ref{th:recursive_feas} and \ref{th:robustness}). In practical terms, our method outperforms existing work \cite{zhang2022robust} by delivering less conservative closed-loop performance (Fig. \ref{fig:compare}). Additionally, we introduce an innovative data-driven technique for constructing the tube corresponding to stochastic errors, which leverages distributionally robust optimization and a Wasserstein ambiguity set (Lemma \ref{lemma:cvar_DRO_reform}). This approach significantly reduces the computational complexity of the online problem compared to \cite{mark2020stochastic,  zhong2021data, coulson2021distributionally, micheli2022data, fochesato2022data, zhong2023tube}}.

The rest of the paper is organized as follows. In Section II, we introduce the control problem and the preliminary Koopman MPC. Section III describes the methodologies applied to this work. In Section IV, we analyze the recursive feasibility and robustness of the closed-loop system. Finally, Section V provides a numerical experiment of a mass-spring system.

\section{PRELIMINARIES}
\update{In this section, we first introduce the control problem formulation. Then, the Koopman theory for controlled dynamical systems and the corresponding lifted disturbed systems constructed accordingly are introduced. Based on the lifted systems, a mixed robust-stochastic SMPC are introduced to address the control problem.}


\subsection{System dynamics, constraints and objective}
We consider the nonlinear time-invariant (LTI) stochastic dynamic system with additive disturbance
\begin{equation}
\label{eq:system}
x^{+}=f_d(x, u) + w_{k}, \quad k \in \mathbb{N}_{\ge 0} 
\end{equation}
with the state $x \in \mathcal{X}$, the control $u \in \mathcal{U}$, and the additive disturbance $w \in \mathbb{W}_w$. Here $\mathcal{X}$ and  $\mathbb{W}_w$ are subsets of $\mathbb{R}^{n_x}$ and $\mathcal{U}$ is a subset of $\mathbb{R}^{n_u}$. \update{Each spaces above are equipped with the Euclidean norm $\| \cdot \|_2$.} We denote the state, input and disturbance realizations at time $k$ with subscript $k$, and it is assumed that the state $x_{k}$ can be measured exactly at time $k$ for all $k \in \mathbb{N}_{\ge 0}$, whereas the system dynamics $f_d: \mathbb{R}^{n_x} \times \mathbb{R}^{n_u}$ are unknown, but following the standard assumption \cite{mamakoukas2020learning}. 
\begin{assumption}
The function $f_d$ is continuously differentiable on the set $\mathcal{X} \times \mathcal{U}$, with constants $L_x$ and $L_u$ denoting the Lipschitz constants with respect to the state and input variables, respectively, \update{i.e. $f_d \in C^1(\mathcal{X} \times \mathcal{U})$, $\| f_d(x,u) - f_d(y,u)\|_2 \le L_x \|x-y\|_2$, and $\| f_d(x,u) - f_d(x,v)\|_2 \le L_u \|u-v\|_2$  $\forall x,y \in \mathcal{X}$ and  $\forall u,v \in \mathcal{U}$.} Furthermore, $(0,0)$ is the equilibrium point of $f_d$, i.e. $f_d(0,0) = 0$.
\end{assumption}
Additionally, each disturbance $w_k$ of the disturbance sequence $\{ w_{k} \}_{k \in \mathbb{N}_{\ge 0}}$ is assumed to be a realization of the corresponding random variable (r.v.) $W_{k}$ from the random process $\{W_{k}\}_{k \in \mathbb{N}_{\ge 0}}$ satisfying the following assumption.

\begin{assumption}[Bounded i.i.d. Random Disturbance]
\label{assump:iid}
All random variables $W_k$ for $k \in \mathbb{N}_{\ge 0}$ from the family of random variables $\{W_{k}\}_{k \in \mathbb{N}_{\ge 0}}$ are assumed to be i.i.d. with an unknown probability distribution $\mathbb{P}_{w}$ and bounded within an unknown polyhedral support $\mathbb{W}_{w} \triangleq \{w \mid H_w w \le h_w\}$.
\end{assumption}


Assuming that the data set $\mathcal{D}:=\{(x_i, u_i, x^{+}_i)\}_{i=1}^{N_d}$ is available, the main target of the paper is proposing a control scheme to track the system state to the tracking point (w.l.o.g. the origin of coordinates) while satisfying individual state chance constraints and hard control input constraints
\updateNew{\begin{subequations}
\begin{align}
\mathbb{P}\left\{[F]_{j} x_{k} \le [f]_{j} \right\} &\ge 1 - [\alpha]_{j}, \quad j \in \mathbb{N}_{1}^{n_F}, \quad k \ge 1,\label{eq:sys_constraints_state}\\
    G u_{k} &\leq g, \quad k \geq 0,\label{eq:sys_constraints_input}   
\end{align}
\end{subequations}}
in the presence of disturbances, where $F \in \mathbb{R}^{n_F \times n_x}, f \in \mathbb{R}^{n_F}, G \in \mathbb{R}^{n_G \times n_u}, g \in \mathbb{R}^{{n_G}}$. Furthermore, we denote $\mathbb{X}=\left\{x \in \mathbb{R}^{n_x}: F x \leq f\right\} \subset \mathcal{X}$ and $\mathbb{U}=\left\{u \in \mathbb{R}^{n_u}: G u \leq g\right\} \subset \mathcal{U}$ as the collection of nominal state and control input constraints, respectively.

\subsection{Koopman-based model}
\update{For controlled systems, $x^{+} = f_d(x,u)$, the Koopman operator can be defined as an operator acting on the function space of extended states. 
Let $\ell(\mathcal{U}):=\{ \left\{u_{i}\right\}_{i=0}^{\infty}\mid u_{i} \in \mathcal{U}\}$ be the space of all infinite control sequences and denote the extended state as $\chi:=\left[\begin{array}{l}x \\ \boldsymbol{u}\end{array}\right]$ and the corresponding dynamical system on the extended state space as $\chi^{+}=F_e(\chi):=\left[\begin{array}{c}f_d(x, \boldsymbol{u}(0)) \\ \Gamma \boldsymbol{u}\end{array}\right]$, where $\Gamma$ is the left shift operator $(\Gamma \boldsymbol{u})(i)=\boldsymbol{u}(i+1)$ and $\boldsymbol{u}(0)$ is the first element of the sequence $\boldsymbol{u} :=\left\{u_{i}\right\}_{i=0}^{\infty}$. 
The associated Koopman operator is then given as $\mathcal{K} \psi(\chi)=\psi(\chi) \circ F_e(\chi)$
where $\psi(\mathcal{X}): \mathcal{X} \times \ell\left(\mathcal{U}\right) \rightarrow \mathbb{R}$ is known as an observable function belonging to the extended observable space $\mathcal{F}$ \cite{korda2018linear}. We denote the vector of observables as $\Psi(\chi) := [\psi_1(x), \dots, \psi_n(x), \boldsymbol{u}(0) ]^{\top}$, the finite-dimensional representation of the Koopman operator associated with the extended state can be acquired from minimizing
$
\sum_{j=1}^{M}\left\|\Psi\left(\chi_j^{+}\right)-\mathcal{K} \Psi\left(\chi_j\right)\right\|_2^2,
$
where $M \in \mathbb{Z}$ is the number of samples. As we are only interested in predicting the lifted state at the next time step, we focus on the first $n$ rows of $\mathcal{K}$ and decompose it into $A \in \mathbb{R}^{n \times n}$ and $B \in \mathbb{R}^{n \times n_u}$, which leads to 
\begin{equation}
\label{eq:lifted_system_learning}
\min _{A, B} \sum_{j=1}^M\left\|\Psi_{1:n}\left(x_j^{+}\right)-A \Psi_{1:n}\left(x_j\right)-B u_j\right\|_2^2.
\end{equation}
We denote the nominal lifted system $s^{+}=A s+B u$, where $\Psi_{1:n}(x) = [\psi_1(x), \dots, \psi_n(x)]^{\top} $ and $s = \Psi_{1:n}(x)$.}

\update{For controlled systems, $x^{+} = f_d(x,u)$, the Koopman operator can be defined as an operator acting on the function space of extended states. 
Let $\ell(\mathcal{U}):=\{ \left\{u_{i}\right\}_{i=0}^{\infty}\mid u_{i} \in \mathcal{U}\}$ be the space of all infinite control sequences and denote the extended state as $\chi:=\left[\begin{array}{l}x \\ \boldsymbol{u}\end{array}\right]$ and the corresponding dynamical system on the extended state space as $\chi^{+}=F_e(\chi):=\left[\begin{array}{c}f_d(x, \boldsymbol{u}(0)) \\ \Gamma \boldsymbol{u}\end{array}\right]$, where $\Gamma$ is the left shift operator $(\Gamma \boldsymbol{u})(i)=\boldsymbol{u}(i+1)$ and $\boldsymbol{u}(0)$ is the first element of the sequence $\boldsymbol{u} :=\left\{u_{i}\right\}_{i=0}^{\infty}$. 
The associated Koopman operator is then given as $\mathcal{K} \psi(\chi)=\psi(\chi) \circ F_e(\chi)$
where $\psi(\mathcal{X}): \mathcal{X} \times \ell\left(\mathcal{U}\right) \rightarrow \mathbb{R}$ is known as an observable function belonging to the extended observable space $\mathcal{F}$ \cite{korda2018linear}. We denote the vector of observables as $\Psi(\chi) := [\psi_1(x), \dots, \psi_n(x), \boldsymbol{u}(0) ]^{\top}$, the finite-dimensional representation of the Koopman operator associated with the extended state can be acquired from minimizing
$
\sum_{j=1}^{M}\left\|\Psi\left(\chi_j^{+}\right)-\mathcal{K} \Psi\left(\chi_j\right)\right\|_2^2,
$
where $M \in \mathbb{Z}$ is the number of samples. As we are only interested in predicting the lifted state at the next time step, we focus on the first $n$ rows of $\mathcal{K}$ and decompose it into $A \in \mathbb{R}^{n \times n}$ and $B \in \mathbb{R}^{n \times n_u}$, which leads to 
\begin{equation}
\label{eq:lifted_system_learning}
\min _{A, B} \sum_{j=1}^M\left\|\Psi_{1:n}\left(x_j^{+}\right)-A \Psi_{1:n}\left(x_j\right)-B u_j\right\|_2^2.
\end{equation}
We denote the nominal lifted system $s^{+}=A s+B u$, where $\Psi_{1:n}(x) = [\psi_1(x), \dots, \psi_n(x)]^{\top} $ and $s = \Psi_{1:n}(x)$.}

\update{In this work, we make the same assumptions for candidate lifting functions $\Psi_{1:n}$ as in \cite{zhang2022robust}.}

\begin{assumption}
\label{assump:lift_func}
$\Psi_{1:n} = (x, \Psi_{n_x:n}(x))$ and $\Psi_{1:n}(0)=0$

\end{assumption}

\update{The first part of the assumption is made to guarantee that $\Psi_{1:n}^{-1}$ exists by selecting $\Psi_{1:n}^{-1}(s)=\left[\begin{array}{ll}I_{n_x} & 0\end{array}\right] s$ such that $\Psi_{1:n}^{-1}(\Psi_{1:n}(x))=x$. The second part is made for the purpose of regulating the origin of coordinates. Such a lifting function can be constructed through coordinate transformation by defining a new function $\Psi_{1:n}(x)=\Psi_{1:n}^{\prime}(x)-\Psi_{1:n}^{\prime}(0)$ for arbitrary $\Psi_{1:n}^{\prime}$. A discussion about the conditions that guarantee a Lyapunov-stable equilibrium in Koopman dynamics can be found in \cite{mamakoukas2020learning}.}

\begin{remark}
    The method proposed in this work is also valid for more general lifting functions. However, for the sake of simplicity, in this work we only discuss the scenario where the lifting functions contain the identity state mapping function.
\end{remark}
\update{
After defining the lifting function $\Psi_{1:n}$, we can acquire a data-driven equivalent Koopman model of \eqref{eq:system} learned from \eqref{eq:lifted_system_learning} by using the data set $\mathcal{D}$, while explicitly considering the modeling error resulting from the Koopman operator and the additive disturbance of \eqref{eq:system} impacting the lifted system:
\begin{equation}
\label{eq:lifted_system}
\left\{\begin{array}{l}
s^{+}=A s+B u+d\left(s, u, w\right) + Dw\\
x=C s,
\end{array}\right.
\end{equation}
where $d\left(s, u, w\right) \in \mathbb{W}_d$ is the modeling error, $w \in \mathbb{W}_w$ is the additive disturbance of \eqref{eq:system}, and $\mathbb{W}_d$ and $\mathbb{W}_w$ are bounded and convex sets containing the origin. The matrix D is by design selected as $\left[\begin{array}{ll}I_{n_x}^{\top} & 0^{\top}\end{array}\right]^{\top}$.  Both sets are learned from the data set $\mathcal{D}$, which is introduced in the previous subsection and the data set with zero-valued inputs $\mathcal{D}_w$, which will be introduced in \updateNew{Section \ref{sec:addtive_dist_est}}. Assumption \ref{assump:lift_func} allows us to select 
$C=\left[\begin{array}{ll}I_{n_x} & 0\end{array}\right]$ such that for some $d \in \mathbb{W}_d$ we have $f_d(x,u) = CA\Psi_{1:n}(x) + CBu + Cd$.}

Here we make assumptions on the observability and stabilizability for the purpose of guaranteeing stability and convergence in Section \ref{sec:proof}.
\begin{assumption}
    The pair $(A, B)$ is stabilizable and the pair $(A, C)$ is observable.
\end{assumption}

\subsubsection{Mixed stochastic-deterministic tube Koopman MPC}
\update{For the purpose of equilibrium \update{regulation}, while satisfying the constraints, MPC method is applied to the lifted system by controlling the predicted states of the future system evolution based on current measurements, i.e. 
given the state measurement $x_k$ at the sample time $k$,  we aim to apply MPC to the lifted state $s_k = \Psi_{1:n}(x_k)$. 
We model the predictions as
\begin{equation}
\label{eq:predicted_lift_system}
\begin{aligned}
s_{i+1 \mid k}&=A s_{i \mid k}+B u_{i \mid k}+d_{i \mid k} + Dw_{i \mid k},\\
s_{0 \mid k}&= s_{k}:=\Psi_{1:n}(x_k),
\end{aligned}
\end{equation}
where  $s_{i \mid k}$ is the uncertain state prediction, $u_{i \mid k}: \mathbb{R}^{n} \rightarrow \mathbb{R}^{m}$ are measurable state feedback functions, $d_{i \mid k} \in \mathbb{W}_{d}$ is the predicted plant-model mismatch $i$ steps ahead of time $k$, and $w_{i \mid k}=W_{k+i}$ as a random variable supported on $\hat{\mathbb{W}}_{w}$. }

\update{
This subsection introduces the general approach of constraint tightening for the lifted nominal system such that it is sufficient to guarantee \eqref{eq:sys_constraints_state} and \eqref{eq:sys_constraints_input}. The technique applied here is to identify a tube around the nominal trajectory, such that the real trajectory resulting from uncertainties could still remain in the tube. Also, the tube is split into static and probabilistic parts to account for the effect of the modeling and additive disturbance respectively. }

\update{We introduce now how to split the lifted states into nominal and uncertain components \cite{paulson2019mixed}. Since there are two sources of uncertainty considered, the predicted state is decomposed into nominal state prediction $\bar{s}_{i \mid k}$, zero-mean stochastic error $e_{i \mid k}$ and deterministic modeling error $\varepsilon_{i \mid k}$:
$$
s_{i \mid k}=\bar{s}_{i \mid k}+e_{i \mid k}+\varepsilon_{i \mid k}.
$$}

\update{By defining the control law $u_{i\mid k}=Ks_{i\mid k} + c_{i\mid k} = K\left(\bar{s}_{i \mid k} + e_{i \mid k}+\varepsilon_{i \mid k}\right)+c_{i \mid k}$, the closed-loop prediction paradigm then must satisfy the following dynamic equations:
\begin{equation}
\label{eq:pred_paradigm}
\begin{aligned}
&\bar{s}_{i+1 \mid k}=\Phi \bar{s}_{i \mid k}+B \bar{u}_{i \mid k}, \quad \bar{s}_{0 \mid k}=s_{k}:=\Psi_{1:n}\left(x_{k}\right) \\
&e_{i+1 \mid k}=\Phi e_{i \mid k}+D w_{i \mid k}, \quad e_{0 \mid k}=0, \\
&\varepsilon_{i+1 \mid k}=\Phi \varepsilon_{i \mid k}+d_{i \mid k}, \quad \varepsilon_{0 \mid k}=d_{0\mid k},
\end{aligned}
\end{equation}
where $\Phi = A + BK$. 
}

\update{
Based on the prediction paradigm \eqref{eq:pred_paradigm}, we can formulate the reachable set of errors in $i$ step from the origin of the system as $R_{i}^{e}=\bigoplus_{j=0}^{i-1} \Phi^{j} D \mathbb{W}_w, \quad R_{i}^{\varepsilon}=\bigoplus_{j=0}^{i-1} \Phi^{j} \mathbb{W}_d$, respectively. We assume in the paper that the pair $(A,B)$ is stabilizable, hence we could find a feedback gain $K$, such that $\Phi$ is strictly stable. Since $\Phi$ is strictly stable, the set sequence $R_{i}^{e}$ and $R_{i}^{\varepsilon}$ converge to the limit $R_{\infty}^{e}$ and $R_{\infty}^{\varepsilon}$, respectively. Both sets are known as robust positively invariant sets, which are defined as the following 
\begin{definition}[Robust positively invariant set]
A set $\Omega$ is called robust positively invariant (RPI) for a system $x_{k+1}=A x_{k}+w_{k}$ with $w_{k} \in W$ for all $k \geq 0$ if, for any $x \in \Omega$, $A x+w \in \Omega$ for all $w \in W$. In other words, $\Omega$ should satisfy $A \Omega \oplus W \subseteq \Omega$.
\end{definition}}

\update{
As next, to guarantee the satisfaction of constraints \eqref{eq:sys_constraints_state} and \eqref{eq:sys_constraints_input}, we consider the following constraints tightening for the lifted systems corresponding to the reachable sets, namely we require that 
\begin{equation}
\label{eq:state_constraints_tighten}
\bar{s}_{i+1 \mid k} \in \bar{S}_{j+1}=\bar{S}_{j} \ominus \Phi^{j} D \hat{\mathbb{W}} \ominus \Phi^{j} \mathbb{D}, \quad j \geq 1 \end{equation}
for $i \ge 0$, where $\mathbb{S}_{1}=\left\{s \in \mathbb{R}^{n}: FCs \leq h-\eta\right\}$ and $\bar{S}_{1}=\mathbb{S}_{1} \ominus \mathbb{D}$. Roughly speaking, the support set of states is tightened by $\eta$ to satisfy the chance constraints and further tightened to guarantee recursive feasibility with respect to the accumulated uncertainty.
Similarly, hard input constraints for the nominal input based on the fact that $ u_{i \mid k}=\bar{u}_{i \mid k}+K e_{i \mid k}+K \varepsilon_{i \mid k}$ can be defined as 
\begin{equation}
\label{eq:input_constraints_tighten}
\bar{u}_{i \mid k} \in \bar{U}_{i}  = \bar{U}_{i-1} \ominus K \Phi^{i} D\hat{\mathbb{W}} \ominus K \Phi^{i} \mathbb{D},  \forall i \geq 1,
\end{equation}
where $\bar{U}_{0}=\mathbb{U}$. The details of the parameter designs for the constraints can be found below in the remark. Throughout the paper we assume that $\Omega$ is closed and contains the origin in the interior.
\begin{remark}
In this work, we consider  constraints $\mathbb{X}$, $\mathbb{S}$ and $\mathbb{U}$ being polytopic. The tightened constraints corresponding to the reachable set of errors, which remain polytopic, can be acquired by solving linear optimization problems. Given the design of $\mathbb{S}_{1}$ mentioned above, $\bar{S}_{1}$ is determined via $\bar{S}_{1} = \left\{s \in \mathbb{R}^{n}: FCs \leq h-\eta - \eta_1\right\}$ where $[\eta_1]_l = \argmax_{d \in \mathbb{D}} I_l^{\top} d$, $\forall l = 1,\dots,n_F$. Here $I_l = [0,\dots,0,1,0,\dots,0]^{\top}$ is a column vector with one at the $l$-th entry and zero for the rest. And $\bar{S}_{j} = \left\{s \in \mathbb{R}^{n}: FCs \leq h-\eta - \sum_{i=1}^{j}\eta_j\right\}$ with $[\eta_1]_l = \argmax_{d \in \mathbb{D}} I_l^{\top} \Phi^{j-1} d + \argmax_{w \in \hat{\mathbb{W}}} I_l^{\top} \Phi^{j-1} D w$ for $j > 1$. The design for the other constraints is similar. More details can be found in \cite[Theorem 8.1]{kouvaritakis2016model}. 
\end{remark}
}

\updateNew{
\begin{remark}
    Given two sets $\mathbb{A},\mathbb{B} \in \mathbb{R}^{n}$, we demonstrate the set operations $\mathbb{A} \oplus \mathbb{B}=\{a+b \mid a \in \mathbb{A}, b \in \mathbb{B}\}$ and
    $\mathbb{A} \ominus \mathbb{B}:=\{a \in \mathbb{A} \mid a+b \in \mathbb{A}, \quad \forall b \in \mathbb{B}\}$ in Figure \ref{fig:min_sum} and \ref{fig:pon_diff}, respectively.
\end{remark}
}

\begin{figure}[h!]
    \centering
    \begin{minipage}[t]{0.45\textwidth}
        \centering
        \includegraphics[width=\linewidth]{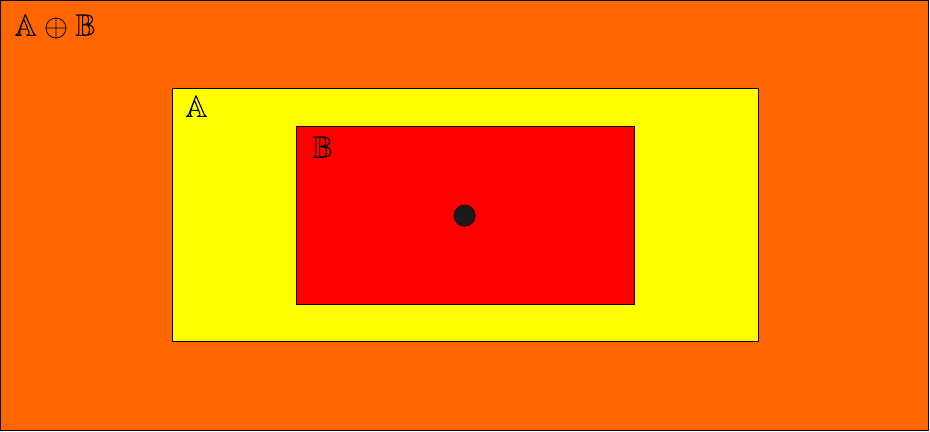}
        \caption{\updateNew{Minkowski sum $\mathbb{A} \oplus \mathbb{B}$}}
        \label{fig:min_sum}
    \end{minipage}
    \hfill
    \begin{minipage}[t]{0.45\textwidth}
        \centering
        \includegraphics[width=\linewidth]{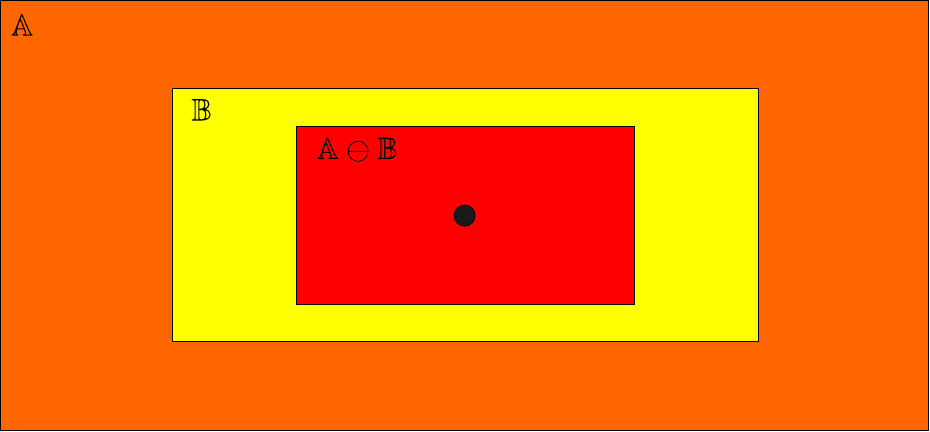}
        \caption{\updateNew{Pontryagin difference $\mathbb{A} \ominus \mathbb{B}$}}
        \label{fig:pon_diff}
    \end{minipage}
\end{figure}

\update{
Now we are ready to introduce the MPC problem to compute $\bar{u}$ and $\bar{s}$. At any time instant $k$, we solve a quadratic optimization with the following objective function
$V\left(\mathbf{c}_{k}\right)=\sum_{i=0}^{N-1}\left\|c_{i \mid k}\right\|_{\Pi}^{2}$,
where $\Pi=R+B^{\top} P B$. Here $P$ is the solution to the Lyapunov equation $P - \Phi^{\top} P \Phi =\bar{Q}+K^{\top} R K,$
, where $\bar{Q}=C^{\top} Q C$. Through  designing the penalty matrix $\Pi$, the finite-horizon costs are equal to the corresponding LQR costs minus a constant term. More details can be found in for example \cite{kouvaritakis2016model}.
Given the objective function, we define the following optimal control problem
\begin{equation}
\label{eq:opt_prob}
\begin{aligned}
&\min _{\boldsymbol{\bar{s}_{k}}, \boldsymbol{\bar{u}_{k}}, \boldsymbol{c}_{k}} V\left(\mathbf{c}_{k}\right) \\
& \text { s.t. } \\
    & \begin{aligned}
    \bar{s}_{0 \mid k} & =s_{k}, \\
    \bar{s}_{i+1 \mid k} &=A \bar{s}_{i \mid k}+B \bar{u}_{i \mid k}, \quad i=0, \ldots, N-1, \\
     \bar{u}_{i \mid k} & =K \bar{s}_{i \mid k} + c_{i \mid k}, \quad i=0, \ldots, N-1\\
    \bar{s}_{i+1 \mid k} & \in \bar{S}_{i+1}, \quad \bar{u}_{i \mid k} \in \bar{U}_{i}, i=0, \ldots, N-1, \\
    \bar{s}_{N \mid k} &\in \bar{S}_{\infty},
\end{aligned}
\end{aligned}
\end{equation}
\updateNew{where $\bar{S}_{\infty}$ is the robust positively invariant set for the lifted system. We will show later in Section \ref{sec:proof} that by designing the constraints of the lift system through tightened tubes around the nominal trajectory, the chance constraints of the original system are satisfied and the optimal control problem remains recursively feasible.}
}



\section{Methodology}
\subsection{Estimating the modeling error and additive disturbance}
In this subsection, we propose methods to estimate additive disturbances and modeling errors. The estimated additive disturbances will be further used for the design of the tightening $\eta$ via a distributionally robust optimization and the modeling errors will be considered to design the reachable sets $R_i^{\varepsilon}$.
\subsubsection{Additive disturbance estimation}
\label{sec:addtive_dist_est}
We first introduce a simple method to estimate additive disturbances from two consecutive states under zero input. Let $\mathcal{D}_w$ be the data set containing $\{x_{i}, x^{+}_{i}\}_{i=1}^{N_w}$, where $x^{+}_{i} := f_d(x_{i},0)+w_{i}$ is the successor state measurement of the autonomous real system \eqref{eq:system}, i.e.  under zero input and an unknown disturbance $w_{i}$. 
We estimate the disturbance through $\hat{w}_i = x^{+}_{i}$. The estimation error could be \updateNew{bounded} by a small value via the following lemma.
\begin{lemma}
    \label{lemma:estimation_error}
    Let Assumption 1 hold, then $\|\hat{w}_i - w_{i}\| \le L_x \|x_i\|$.
\end{lemma}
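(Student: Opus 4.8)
The plan is to reduce the bound to a direct application of the Lipschitz property together with the equilibrium condition in Assumption 1. First I would substitute the definitions: since $\hat{w}_i = x^+_i$ and the successor state under zero input satisfies $x^+_i = f_d(x_i, 0) + w_i$, the estimation error collapses to
\begin{equation}
\hat{w}_i - w_i = f_d(x_i, 0) + w_i - w_i = f_d(x_i, 0).
\end{equation}
Hence the problem is entirely equivalent to bounding $\|f_d(x_i, 0)\|$, and the unknown disturbance $w_i$ drops out exactly, which is the point of sampling under zero input.

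Next I would exploit the two components of Assumption 1 in tandem. The equilibrium condition $f_d(0,0) = 0$ lets me rewrite $f_d(x_i,0) = f_d(x_i,0) - f_d(0,0)$ without changing its value, producing a difference in which only the state argument varies (from $x_i$ to $0$) while the input argument is held fixed at $0$. This is precisely the form to which the state-Lipschitz estimate applies, giving
\begin{equation}
\|f_d(x_i,0)\| = \|f_d(x_i,0) - f_d(0,0)\| \le L_x \|x_i - 0\|_2 = L_x \|x_i\|.
\end{equation}
Combining the two displays yields $\|\hat{w}_i - w_i\| \le L_x \|x_i\|$, as claimed.

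There is no genuine obstacle here; the result is a one-line consequence once the definitions are unpacked. The only point requiring care is that both halves of Assumption 1 are needed and play distinct roles: the equilibrium condition $f_d(0,0)=0$ is what turns the raw value $f_d(x_i,0)$ into an admissible difference of $f_d$-evaluations, and the state-Lipschitz constant $L_x$ is what controls that difference. I would also note in passing that the estimate is tight in the sense that it is governed solely by the magnitude of the measured state $x_i$, so sampling $\mathcal{D}_w$ near the origin keeps the additive-disturbance estimation error small, which is exactly the downstream property needed when these estimates feed the tube construction.
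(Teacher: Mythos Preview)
Your proof is correct and follows essentially the same approach as the paper: substitute the definitions to reduce to bounding $\|f_d(x_i,0)\|$, then use the equilibrium condition $f_d(0,0)=0$ together with the state-Lipschitz constant $L_x$. If anything, your version is slightly cleaner, since the paper routes through a triangle-inequality step (with an apparent sign typo) that your direct subtraction avoids.
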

\begin{proof}
    $$
\begin{aligned}
    \| \hat{w}_i - w_i\| & = \| x^{+}_i - w_i\|\\
                         &= \| f(x_i,0)\| \\
                         &= \| f(x_i,0) + f(0,0) - f(0,0)\| \\
                         & \le \| f(x_i,0) + f(0,0)\| + \|f(0,0)\| \\
                        & \le L_x \|x_i\|.
\end{aligned}
$$
\end{proof}
Through the proposed way of estimating additive disturbances, if $x_i$ is close enough to the origin, i.e. $\|x_i \| \le \epsilon_{x}$, the estimated disturbance can be represented by $\hat{w}_i = x^{+}_{i}$ and the error is bounded via $\|\hat{w}_i - w_{i}\| \le L_x\epsilon_{x}$. In this fashion, the estimation error is small if the Lipschitz constant or the distance $\varepsilon_x$ is small. We define the hyper-cubic support set of $\{\hat{w}\}_{i = 1}^{N_w}$ as $\hat{\mathbb{W}}_{w}$, i.e. $\{\hat{w}\}_{i = 1}^{N_w} \subseteq \hat{\mathbb{W}}_{w}$. We will use the empirical distribution of $\{\hat{w}\}_{i = 1}^{N_w}$ to construct a stochastic tube with a finite sample guarantee explicitly considering the estimation error.

As the support set $\hat{\mathbb{W}}_{w}$ of the estimated additive disturbance is an outer approximation of the true support set of additive disturbances, we make the following assumption.

\begin{assumption}
    We assume $\mathbb{W}_w \subseteq \hat{\mathbb{W}}_{w}$.
\end{assumption}


\subsubsection{Modeling error estimation}
We propose to estimate the total effect of modeling error $d$ and additive disturbances, i.e. $\bar{w} := d + Dw$, via a quantifiable hyper-cubic set. For the given data set $\mathcal{D}$, we calculate the total modeling error via 
\begin{equation}
\label{eq:modelling_error}
\bar{w}_i : = \Psi_{1:n}(x_i^{+}) - (A \Psi_{1:n}(x_i) + B u_{i}).
\end{equation} For the original system, the error is determined via $\bar{w}_i : = x_i^{+} - C (A \Psi_{1:n}(x_i) + B u_{i})$, which consists of the first $n_x$ element of $\bar{w}_i$. We learn a hyper-cube hull of $\{\bar{w}_i\}_{i=1}^{N_d}$ via a hyper-cubic set, which is denoted as $\bar{\mathbb{W}}$. After taking the Pontryagin difference of $\bar{\mathbb{W}}$ and $D\hat{\mathbb{W}}_{w}$, the support set of $d$ is acquired via $\mathbb{D} := \bar{\mathbb{W}} \ominus D\hat{\mathbb{W}}_{w}$, which is a hyper-cubic set as well.

Furthermore, we quantify the required number of data points to guarantee that the probability that $\bar{\mathbb{W}}$ containing all the errors is great than $1-\varepsilon_{h}$ with confidence $1-\delta_{h}$. The technique applied here is Hoeffding’s Inequality. We define the indicator function
$$
\mathbb{I}_{\bar{\mathbb{W}} }\left(\bar{W}\right):= \begin{cases}1 & \text { if} \quad \bar{W} \in \bar{\mathbb{W}}  \\ 0 & \text { otherwise }\end{cases}
$$
to show if the realization $\bar{W}$ is in the set $\bar{\mathbb{W}}$.
\begin{definition}[Hoeffding’s Inequality]
Let $\mathbb{I}_{\bar{\mathbb{W}} }\left(\bar{W}_{i}\right)$ with $i=1, \ldots, N_{w}$ be $N_{w}$ independent random variables with $0 \leq \mathbb{I}_{\bar{\mathbb{W}} }\left(\bar{W}_{i}\right) \leq 1$, and $\mathbb{E}\left[\mathbb{I}_{\bar{\mathbb{W}} }\left(\bar{W}_{i}\right)\right]=\mu$. Then, for any $\epsilon_{h} > 0$,
$$
\mathbb{P}\left[|\bar{\mu}-\mu| \geq \varepsilon_{h}\right] \leq 2 \exp \left(-2 N_{p} \varepsilon_{h}^{2}\right),
$$ where $\bar{\mu}=N_{w}^{-1} \sum_{i} \mathbb{I}_{\bar{\mathbb{W}} }\left(\bar{W}_{i}\right)$.
\end{definition}
\begin{lemma}
\label{lemma:finite_sample_modelling_error}
    Let Assumptions 2, 4 hold and $\bar{w}_{i}, \forall i = 1 \dots N_d$ be derived from \eqref{eq:modelling_error} corresponding to data set $\mathcal{D}:=\{(x_i, u_i, x^{+}_i)\}_{i=1}^{N_d}$ with $N_{d} \ge \frac{- \ln{(\delta_{h}/2})}{2 \epsilon_h^{2}}$ for the some $\varepsilon_h$ and $\delta_h$. Also, let the set $\bar{\mathbb{W}}$ be a hyper-cubic hull of $\{\bar{w}_i\}_{i=1}^{N_d}$. Then, we have $\mathbb{E}\left[\mathbb{I}_{\bar{\mathbb{W}} }\left(\bar{W}\right)\right] = \mathbb{E}\left[\mathbb{I}_{\bar{\mathbb{W}} }\left(\bar{W}_{i}\right)\right] \ge 1 - \epsilon_h$ with confidence $1-\delta_h$.
\end{lemma}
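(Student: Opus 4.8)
The plan is to recognize this statement as essentially a direct application of Hoeffding's inequality, stated immediately above, combined with one structural observation about the hyper-cubic hull. The central idea is that because $\bar{\mathbb{W}}$ is constructed so as to contain every sample $\bar{w}_i$, the empirical coverage frequency is identically one, so Hoeffding's inequality controls how far the true coverage probability $\mu = \mathbb{E}[\mathbb{I}_{\bar{\mathbb{W}}}(\bar{W})]$ can lie below one. Assumptions 2 and 4 supply the i.i.d.\ structure and boundedness of the total-error samples, which also justifies the equality $\mathbb{E}[\mathbb{I}_{\bar{\mathbb{W}}}(\bar{W})] = \mathbb{E}[\mathbb{I}_{\bar{\mathbb{W}}}(\bar{W}_i)]$ (identical distribution).

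First I would note that by the definition of $\bar{\mathbb{W}}$ as a hyper-cubic hull of $\{\bar{w}_i\}_{i=1}^{N_d}$ we have $\bar{w}_i \in \bar{\mathbb{W}}$ for every $i$, hence $\mathbb{I}_{\bar{\mathbb{W}}}(\bar{W}_i) = 1$ for all $i$ and the empirical mean satisfies $\bar{\mu} = N_d^{-1}\sum_i \mathbb{I}_{\bar{\mathbb{W}}}(\bar{W}_i) = 1$. Since $0 \le \mu \le 1$, the event $|\bar{\mu} - \mu| \ge \epsilon_h$ coincides with $\{\mu \le 1 - \epsilon_h\}$, so Hoeffding's inequality with $N_d$ samples gives $\mathbb{P}[\mu \le 1 - \epsilon_h] \le 2\exp(-2 N_d \epsilon_h^2)$. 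Finally I would substitute the sample-size hypothesis $N_d \ge -\ln(\delta_h/2)/(2\epsilon_h^2)$, which is exactly the threshold making $2\exp(-2 N_d \epsilon_h^2) \le \delta_h$; rearranging $\exp(-2N_d\epsilon_h^2) \le \delta_h/2$ recovers this bound. Consequently the complementary event $\{\mu \ge 1 - \epsilon_h\}$ holds with probability at least $1 - \delta_h$, which is the claim.

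The main obstacle is conceptual rather than computational: the set $\bar{\mathbb{W}}$, and therefore the indicator $\mathbb{I}_{\bar{\mathbb{W}}}(\cdot)$, is itself built from the same samples $\bar{W}_1,\dots,\bar{W}_{N_d}$ used to form $\bar{\mu}$, which strictly speaking breaks the independence hypothesis under which Hoeffding's inequality is stated and makes $\mu$ a data-dependent quantity. I would address this by stating explicitly that the probability is taken over the draw of the sample set and by using the monotonicity of the hull construction—adding points can only enlarge $\bar{\mathbb{W}}$, so the in-sample coverage is deterministically one—applying the inequality to the fixed realized set $\bar{\mathbb{W}}$ to extract the out-of-sample coverage guarantee. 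I expect confirming that this heuristic matches the prescribed $N_d$ threshold to be the only delicate point, and the algebra linking $\epsilon_h$, $\delta_h$, and $N_d$ settles it.
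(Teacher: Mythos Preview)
Your proposal is correct and matches the paper's approach exactly: the paper's proof is a single sentence stating that the result follows directly from Hoeffding's inequality together with the observation that $\mathbb{I}_{\bar{\mathbb{W}}}(\bar{W}_i)=1$ for all $i$. Your additional discussion of the data-dependence of $\bar{\mathbb{W}}$ goes beyond what the paper addresses, but the core argument is identical.
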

\begin{proof}
    This directly comes from the definition of Hoeffding's inequality with $\mathbb{I}_{\bar{\mathbb{W}} }\left(\bar{W}_{i}\right) = 1$ for all $i = 1, \dots, N_w$.
\end{proof}

\subsection{Chance constraints design}
We are now ready to introduce the design of $\eta$ such that the chance constraints can be satisfied.
\subsubsection{CVaR constraints}
It is well-known that constraints $\mathbb{P}(F(u, \xi) \leq 0) \geq 1-\varepsilon$, where $F: \mathbb{R}^{n_u} \times \mathbb{R}^{n_{\xi}} \rightarrow \mathbb{R}$, have a non-convex feasibility set even if $X$ is convex and $F$ is convex in $u$ for every $\xi$ \cite{hota2019data}.
In this paper, we consider the approximation known as Conditional value-at-risk (CVaR). 

We replace the chance constraint with a convex conservative approximation and denote the left-hand side of the inequality as CVaR of the random variable $Z$ with distribution $\mathbb{P}$ at level $\varepsilon$: $\operatorname{CVaR}_{1-\varepsilon}^{\mathbb{P}}(Z):=\inf _{t \in \mathbb{R}}\left[\varepsilon^{-1} \mathbb{E}_{\mathbb{P}}\left[(Z+t)_{+}\right]-t\right] .$ In the next subsection we introduce a data-driven method to identify the level $\varepsilon$ via distributionally robust optimization.
\subsubsection{Distributionally robust optimization and Ambiguity Sets}
\label{sec:DRO_ambiguity}
Distributionally robust optimization 
\begin{equation}
\label{eq:DRO_prototype}
\inf _{x \in \mathbb{X}} \sup _{\mathbb{Q} \in \mathcal{P}} \mathbb{E}_{\mathbb{Q}}[h(x, \xi)]
\end{equation}
is an optimization model which utilizes partial information about the underlying true probability distribution $\mathbb{P}_{\text{true}}$ of the random variables $\xi$ in a stochastic model. To characterize the partial information about the true distribution, we define an ambiguity set $\mathcal{P}$ \cite{mohajerin2018data}, which contains a family of probability measures on the measurable space $(\Omega, X)$. In this paper, Wasserstein ambiguity sets will be considered, as solving the DRO problem with the Wasserstein ambiguity set enables us to quantify the value of linear functions of random variables without assuming the exact knowledge of the true probability distribution. We solve such a data-driven DRO problem offline to acquire the stochastic tube, namely the tightening accounting for the chance constraints based on the estimated disturbances. 

A Wasserstein ambiguity set is modeled as a Wasserstein ball centered at a discrete empirical distribution. The Wasserstein ball is a discrepancy-based model wherein the distance  between probability distributions on the probability distribution space and the empirical distribution at the ball center is described by the Wasserstein metric, which is defined below.

\begin{definition}[Wasserstein Metric \cite{ambrosio2005gradient}]
\label{def:Wassertein_metric}
The \emph{Wasserstein metric} of order $p \ge 1$ is defined as $d_w: \mathcal{M}(\mathbb{W}_{\xi}) \times \mathcal{M}(\mathbb{W}_{\xi}) \rightarrow \mathbb{R}$ for all distribution $\mathbb{Q}_1, \mathbb{Q}_2 \in \mathcal{M}(\mathbb{W}_{\xi})$ and arbitrary norm on $\mathbb{R}^{n_\xi}$:
$$d^{p}_{w}\left(\mathbb{Q}_{1}, \mathbb{Q}_{2}\right):=\inf_{\Pi} \int_{\mathbb{W}_{\xi}^{2}}\left\|\xi_{1}-\xi_{2}\right\|^{p} \Pi\left(\mathrm{d} \xi_{1}, \mathrm{~d} \xi_{2}\right),$$
where $\Pi$ is a joint distribution of $\xi_{1}$ and $\xi_{2}$ with marginals $\mathbb{Q}_{1}$ and $\mathbb{Q}_{2}$ respectively, and $\mathcal{M}(\mathbb{W}_{\xi})$ is the set of all probability distributions supported on $\mathbb{W}_{\xi}$ with a finite expectation.
\end{definition}

We only consider a type-1 Wasserstein metric in this work. Specifically, we define the ambiguity set $\mathcal{P}_{N_s}$ centered at the empirical distribution leveraging the Wasserstein metric
\updateNew{
\begin{equation}
\mathcal{P}_{N_s}:=\left\{\mathbb{Q} \in \mathcal{M}(\mathbb{W}_{\xi}): d_{w}\left(\hat{\mathbb{P}}_{N_s}, \mathbb{Q}\right) \leq \varepsilon\right\},
\label{eq:Wasserstein_ball}
\end{equation}
}
which specifies the Wasserstein ball with radius $\varepsilon>0$ around the discrete empirical probability distribution $\hat{\mathbb{P}}_{N_s}$. The empirical probability distribution  $\hat{\mathbb{P}}_{N_s}:=\frac{1}{N_s} \sum_{l=1}^{N_s} \delta_{\hat{\xi}^{(i)}}$ is the mean of $N_s$ Dirac distributions which concentrates unit mass at the disturbance realization $\hat{\xi}^{(i)} \in \mathbb{W}_{\xi}$. We denote the training set of offline collected realizations $\xi$ as $\mathcal{D}^{N_s}_{\xi}:=\left\{\hat{\xi}^{(i)}\right\}_{i = 1}^{N_s} \subset \mathbb{W}_{\xi}$, which contains $N_s$ observed disturbance realizations.

After defining the Wasserstein ball, we will introduce the following Lemma, which will be applied to solve a DRO problem with a Wasserstein ambiguity set to derive the tightened nominal state constraints.

\begin{lemma}
\label{lemma:cvar_DRO_reform}
Given a compact support set $\mathbb{W}_{\xi}=\left\{\xi \in \mathbb{R}^{n_{\xi}} \mid H \xi \leq h\right\}$, $N$ vectors $a_i \in \mathbb{R}^{n_{\xi}}$, and $N_s$ realization of $\xi$, the optimization problem
\updateNew{
\begin{equation}
\label{eq:DRO_CVaR_proto}
\begin{array}{ll}
\min_{\eta} & \eta \\
\text { s.t. } & \sup_{\mathbb{P} \in \mathcal{P}_{N_s}} \operatorname{CVaR}_{1-\alpha}^{\mathbb{P}}(a^{\top} \xi - \eta) \le 0
\end{array}
\end{equation}
}
is equivalent to the following tractable convex optimization problem 
\begin{equation}
\label{eq:CVaR_DRO_reform}
\begin{array}{ll}
\min_{\eta} & \eta\\
\text { s.t. } & \lambda \theta+\frac{1}{N} \sum_{l=1}^{N} s_{l} \leq t \alpha, \\
& \left(-\eta + t+ (a^{\top}-\gamma_{l}^{\top}H) \widehat{\xi}_{l}+\gamma_{i l}^{\top} h\right)_{+} \leq s_{l}, \\
& \left\|a - H^{\top} \gamma_{l}\right\| \leq \lambda, \\
& \eta \ge 0, t \in \mathbb{R}, \lambda \geq 0, \gamma_{l} \geq 0, l \in \mathbb{N}_{1}^{N_s}
\end{array}
\end{equation}
\end{lemma}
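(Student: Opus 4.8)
The plan is to reduce the distributionally robust CVaR constraint to a worst-case expectation over the Wasserstein ball and then invoke the standard strong-duality reformulation for such expectations. First I would substitute the definition of the CVaR from the preceding subsection, writing the constraint as
\[
\sup_{\mathbb{P}\in\mathcal{P}_{N_s}}\ \inf_{t\in\mathbb{R}}\Big[\alpha^{-1}\mathbb{E}_{\mathbb{P}}\big[(a^{\top}\xi-\eta+t)_{+}\big]-t\Big]\le 0 .
\]
The key structural observation is that the integrand is affine (hence concave) in $\mathbb{P}$ and convex in $t$, and that the type-1 Wasserstein ball $\mathcal{P}_{N_s}$ over the compact support $\mathbb{W}_\xi$ is convex and weakly compact. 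This lets me interchange $\sup_{\mathbb{P}}$ and $\inf_t$ via Sion's minimax theorem, turning the constraint into the requirement that there exist $t\in\mathbb{R}$ with $\alpha^{-1}\sup_{\mathbb{P}\in\mathcal{P}_{N_s}}\mathbb{E}_{\mathbb{P}}[(a^{\top}\xi-\eta+t)_{+}]\le t$.

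Second, I would evaluate the inner worst-case expectation of the map $\xi\mapsto(a^{\top}\xi-\eta+t)_{+}=\max\{0,\,a^{\top}\xi+(t-\eta)\}$, which is a pointwise maximum of two affine pieces. Applying the Wasserstein worst-case-expectation duality of Mohajerin Esfahani and Kuhn, the supremum equals
\[
\inf_{\lambda\ge 0,\,s_l,\,\gamma_l}\ \Big\{\lambda\theta+\tfrac{1}{N_s}\textstyle\sum_{l=1}^{N_s}s_l\Big\}
\]
subject to, for each sample $\hat\xi_l$ and each affine piece, a bound $b_k+a_k^{\top}\hat\xi_l+\gamma_{l}^{\top}(h-H\hat\xi_l)\le s_l$ together with the Lipschitz-type constraint $\|a_k-H^{\top}\gamma_l\|_*\le\lambda$ and $\gamma_l\ge 0$, where $\theta$ is the Wasserstein radius and $\|\cdot\|_*$ is the dual norm (the Euclidean norm being self-dual here). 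For the zero piece ($a_k=0$, $b_k=0$) one may take the corresponding multiplier to be $0$: since $h-H\hat\xi_l\ge 0$ for samples inside the support, this piece reduces to the redundant $s_l\ge 0$ and $0\le\lambda$. For the affine piece ($a_k=a$, $b_k=t-\eta$) the bound reads $(t-\eta)+a^{\top}\hat\xi_l+\gamma_l^{\top}(h-H\hat\xi_l)\le s_l$. Combining the surviving $s_l\ge 0$ with this inequality is exactly what the positive-part constraint $\big(-\eta+t+(a^{\top}-\gamma_l^{\top}H)\hat\xi_l+\gamma_l^{\top}h\big)_{+}\le s_l$ encodes, while the accompanying norm constraint becomes $\|a-H^{\top}\gamma_l\|\le\lambda$.

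Finally, I would fold the existence of a feasible $t$ back into the outer problem: the requirement $\alpha^{-1}(\lambda\theta+\tfrac1{N_s}\sum_l s_l)\le t$ becomes the first listed constraint $\lambda\theta+\tfrac1{N_s}\sum_l s_l\le t\alpha$, with $t$ promoted to a decision variable alongside $\eta,\lambda,s_l,\gamma_l$; since we minimize $\eta$ and the remaining variables enter only as a feasibility certificate, the two programs share the same optimal $\eta$. I expect the main obstacle to be the rigorous justification of the two exchanges of $\sup$ and $\inf$: the minimax swap in the first step (which needs weak compactness of the Wasserstein ball together with convexity/lower semicontinuity to apply Sion's theorem) and, inside the worst-case expectation, the strong duality of the support-constrained inner maximization $\sup_{H\xi\le h}[a^{\top}\xi-\lambda\|\xi-\hat\xi_l\|]$, which requires a Slater-type condition on $\mathbb{W}_\xi$ (nonempty interior, with the samples in the support) so that no duality gap arises.
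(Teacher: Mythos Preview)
Your argument is correct and is precisely the derivation that underlies the result, but it takes a genuinely different route from the paper. The paper does not reconstruct the reformulation at all: it simply specializes Proposition~V.1 of Hota et al.\ by setting $F(x,\xi):=a^{\top}\xi-\eta$ and reads off the constraints. What you do instead is unpack that proposition from first principles---the Sion minimax swap to extract $t$, followed by the Mohajerin Esfahani--Kuhn strong duality for the worst-case expectation of a max-of-affines loss over a polyhedral support, and then the observation that the zero piece collapses to $s_l\ge 0$ so the two per-sample constraints merge into a single positive-part bound. Your approach buys a self-contained proof and makes explicit where compactness of $\mathbb{W}_\xi$ and Slater's condition enter; the paper's approach buys brevity by delegating exactly those technical verifications to the cited reference.
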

\begin{proof}
See \cite[Prop. V.1]{hota2019data} with $F(x, \xi):=a^{\top} \xi - \eta$.

\end{proof}
\updateNew{Using $N_s$ offline realizations of the disturbance, the convex optimization problem \eqref{eq:CVaR_DRO_reform} enables us to determine the tightening $\eta$ associated with the chance constraints \eqref{eq:sys_constraints_state} without requiring the exact knowledge of the disturbance’s true underlying probability measure.}

\subsubsection{Recursively feasible tubes for mixed uncertainty}
\label{sec:recursive_feas_design}
Now, we are at the stage of introducing the methods we applied to tighten the constraints for nonlinear states $\bar{s}_{i\mid k}$ and inputs $\bar{u}_{i\mid k}$ such that the constraints \eqref{eq:sys_constraints_state} and \eqref{eq:sys_constraints_input} can be satisfied. We denote here the chance constraints for the lifted system as 
\begin{equation}
\label{eq:lift_state_constraint1}
\mathbb{P} \{[F_{lift}]_j s_k \le [f_{lift}]_j \} \le 1 - [\alpha]_j.
\end{equation}
Notice that with the relation $x = Cs$, we have $Fx = FCs$; hence we select $F_{lift} = FC$ and $f_{lift} = f$. 

In order to guarantee recursive feasibility, we need to guarantee: (1) the satisfaction of the lifted constraints implies the satisfaction of the original constraints; (2) the lifted constraints are satisfied recursively under all possible errors.

The first requirement can be guaranteed if we could guarantee \updateNew{$\inf_{\mathbb{P} \in \mathcal{P}_{N_s}}\mathbb{P}\{\sup_{d_{i\mid k}}[F]_j Cs_{i+1 \mid k} \le [f]_j \mid s_{i \mid k}\} \ge 1 - [\alpha]_j$}, which means that for the worst-case additive disturbance estimation and modeling error, the chance constraints of the original system \eqref{eq:system} is still satisfied. Hence, we define the nominal constraint for the first predicted state as $\bar{S}_1=\mathbb{S}_1 \ominus \mathbb{W}_d$, where $\mathbb{S}_1=\left\{s \in \mathbb{R}^{n_s}: [F]_jC s \leq f-[\eta]_j\right\}$ and each element of $\eta$ is determined through solving 
\updateNew{
$$
\begin{array}{ll}
\min_{[\eta]_{j}} &  [\eta]_j \\
\text { s.t. } & \sup_{\mathbb{P} \in \mathcal{P}_{N_s}} \operatorname{CVaR}_{1-[\alpha]_j}^{\mathbb{P}}([F]_j^{\top} w - [\eta]_j) \le 0.
\end{array}
$$
}


To guarantee the second requirement, we consider a sufficient condition guaranteeing that \eqref{eq:lift_state_constraint1} is satisfied for all $k \ge 0$ can be formulated as below
\begin{equation}
\label{eq:lift_state_constraint1_sufficient}
\mathbb{P}\left\{[F]_jC s_{k+1} \leq [f]_j \mid s_{k}\right\} \geq 1-[\alpha]_j, \quad j \in \mathbb{N}_{1}^{n_F}, \quad k \geq 0. \end{equation}
This sufficient condition guarantees that \eqref{eq:lift_state_constraint1} holds for all reachable $s_k$, i.e. \begin{equation}
 \label{eq:lift_state_constraint1_sufficient2}   
\mathbb{P}\left\{[F]_jC s_{k+1} \leq [f]_j\right\}\\
=\int \underbrace{\mathbb{P}\left\{[F]_jC s_{k+1} \leq [f]_j \mid s_{k}\right\}}_{\geq 1-[\alpha]_j, \forall s_{k}} d F_{s_{k}}\left(s_{k}\right),\end{equation}
where $d F_{s_{k}}\left(s_{k}\right)$ is the probability measure for $s_{k}$, which integrates to one based on the axioms of probability. \updateNew{Since the integral of the probability measure over the whole domain is $1$, Eq. \eqref{eq:lift_state_constraint1_sufficient2} implies that $\mathbb{P}\left\{[F]_jC s_{k+1} \leq [f]_j\right\} \ge 1 - [\alpha]_j$.}

\subsubsection{Measure Concentration for shifted samples}
To justify using the estimated additive disturbance for constraint tightening through solving a DRO, we show that based on the Lipschitz constant $L_x$ a well-selected ball radius centered at the empirical distribution of estimated disturbances could contain the true distribution of additive disturbance.

\begin{assumption}
\label{assump:light_tailed_distr}
There exists an exponent $a>1$ such that 
\updateNew{
\begin{equation}
\mathbb{E}_{\mathbb{P}}\left[\exp \left(\| \xi\|^{a}\right)\right]=\int_{\Xi} \exp \left(\|\xi\|^{a}\right) \mathbb{P}(\mathrm{d} \xi)<\infty
\end{equation}
}
\end{assumption}


\begin{lemma}[Distance between discrete distributions]
\label{lemma:Wass_dist_emp}
For two empirical distributions $\hat{\mathbb{P}}_{N_1} = \frac{1}{N_1} \sum_{i=1}^{N_1} \delta_{\hat{\xi}^{(i)}}$ and $\hat{\mathbb{P}}_{N_2} = \frac{1}{N_2} \sum_{j=1}^{N_2} \delta_{\hat{\xi}^{(j)}}$, the Wasserstein distance between $\hat{\mathbb{P}}_{N_1}$ and $\hat{\mathbb{P}}_{N_2}$ is 
\begin{equation}
\label{eq:Wass_dist_emp}
\begin{aligned}
d_{w} (\hat{\mathbb{P}}_{N_1}, \hat{\mathbb{P}}_{N_2}) = \inf_{M} & \sum_{j = 1 }^{N_2}\sum_{i =1}^{N_1} \|\hat{\xi}^{(i)} - \hat{\xi}^{(j)} \| M_{ij}\\
s.t. & \sum_i M_{ij} = \frac{1}{N_2}, \quad \sum_j M_{ij} = \frac{1}{N_1}.
\end{aligned}
\end{equation}
Let $\mathbb{P}_{N_1}$  and $\mathbb{P}_{N_2}$ both with $N$ sample. If on the same finite support set, the maximal distance $\max\|\hat{\xi}^{(i)} - \hat{\xi}^{(j)} \|$ is $\varepsilon_{\xi}$ for all $\forall i,j \in \mathbb{Z}_{1}^{N}$, then the maximal Wasserstein distance is $\varepsilon_{\xi}$.
\end{lemma}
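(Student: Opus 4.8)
The plan is to prove the two assertions in turn: first that the type-1 Wasserstein distance between two finitely-supported measures reduces to the finite transportation linear program \eqref{eq:Wass_dist_emp}, and then that a diameter bound of $\varepsilon_\xi$ on the common support forces the distance to be at most $\varepsilon_\xi$, with that value attained.

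For the linear-program reformulation I would start from Definition \ref{def:Wassertein_metric} with $p=1$. The key observation is that any coupling $\Pi$ whose marginals are $\hat{\mathbb{P}}_{N_1}$ and $\hat{\mathbb{P}}_{N_2}$ must be supported on the finite grid $\{\hat\xi^{(i)}\}_{i=1}^{N_1}\times\{\hat\xi^{(j)}\}_{j=1}^{N_2}$: since each marginal assigns zero mass to the complement of its (finite) support, $\Pi$ must assign zero mass to any product set in which one factor lies outside that support. Consequently $\Pi$ is completely described by the nonnegative matrix $M_{ij}:=\Pi(\{(\hat\xi^{(i)},\hat\xi^{(j)})\})$, the integral collapses to the finite sum $\sum_{i,j}\|\hat\xi^{(i)}-\hat\xi^{(j)}\|M_{ij}$, and the two marginal conditions become precisely $\sum_j M_{ij}=1/N_1$ and $\sum_i M_{ij}=1/N_2$. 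Taking the infimum over all such $M$ then yields the stated Monge--Kantorovich program \eqref{eq:Wass_dist_emp}.

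For the diameter bound I would set $N_1=N_2=N$ and first note that the marginal constraints force the total transported mass to be one, $\sum_{i,j}M_{ij}=\sum_i\sum_j M_{ij}=\sum_i 1/N=1$. For any feasible $M$ I would then bound every transport cost by the diameter, $\|\hat\xi^{(i)}-\hat\xi^{(j)}\|\le\varepsilon_\xi$, so that $\sum_{i,j}\|\hat\xi^{(i)}-\hat\xi^{(j)}\|M_{ij}\le\varepsilon_\xi\sum_{i,j}M_{ij}=\varepsilon_\xi$. Since this holds for every feasible coupling, the infimum obeys $d_w(\hat{\mathbb{P}}_{N_1},\hat{\mathbb{P}}_{N_2})\le\varepsilon_\xi$. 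Tightness follows by exhibiting a worst-case configuration, e.g.\ two Dirac measures whose atoms are exactly $\varepsilon_\xi$ apart, for which the coupling is unique and has cost exactly $\varepsilon_\xi$; hence $\varepsilon_\xi$ is the maximal attainable Wasserstein distance over all empirical measures on a support of diameter $\varepsilon_\xi$.

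The only genuinely delicate step is the measure-theoretic reduction in the first part: one must argue rigorously that the infimum over all a priori continuous couplings $\Pi$ coincides with the infimum over finite matrices $M$, rather than merely asserting that every grid-supported coupling has this form. The support argument above closes this gap, after which both claims reduce to one-line estimates.
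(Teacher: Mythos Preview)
Your argument is correct and, for the diameter bound, follows essentially the same route as the paper: bound each transport cost by $\varepsilon_\xi$ and use that the coupling masses sum to one (the paper phrases this via a $\max\inf\le\inf\max$ swap, but the content is identical). You additionally supply the measure-theoretic justification for the finite linear-program reformulation and the tightness example, neither of which the paper's proof addresses.
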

\begin{proof}
The maximal Wasserstein distance between two empirical distributions is derived from 
$$
\begin{aligned}
\displaystyle\max_{\hat{\xi}^{(i)},\hat{\xi}^{(j)} \in \mathbb{W}_{\xi}} \eqref{eq:Wass_dist_emp} & \le \inf_{M} \max_{\hat{\xi}^{(i)},\hat{\xi}^{(j)} \in \mathbb{W}_{\xi}}  \sum_{j = 1 }^{N_2}\sum_{i =1}^{N_1} \|\hat{\xi}^{(i)} - \hat{\xi}^{(j)} \| M_{ij} \le \varepsilon_{\xi}
\end{aligned}
$$
The first inequality is due to the property of the minimax problem and the second inequality happens because the maximal distance is bounded by $\varepsilon_{\xi}$ and probability masses sum up to one.
\end{proof}
With the help of Lemma \ref{lemma:Wass_dist_emp}, we denote the support sets of the estimated and true disturbance as $\mathbb{W}_{w}$ and $\mathbb{W}_{w} \oplus L_x \varepsilon_x\mathbb{W}_c$, where  $\mathbb{W}_c$ is a unit circle of the norm $\|\cdot \|$. From Lemma \ref{lemma:Wass_dist_emp} we know that $\epsilon_{max}$ is $L_x \varepsilon_x$. Hence, we derive the following theorem for finite sample guarantee of estimated disturbances.

\begin{theorem}
\label{th:finite_sample}
If Assumptions 1, 4, 5 are satisfied, \update{then there exist constants $c_1$ and $c_2$ such that 
with the sample number $N \geq \frac{\log \left(c_{1} \beta^{-1}\right)}{c_{2}}$} the Wasserstein ball with radius 
$\varepsilon_{N}(\beta):= 
\left(\frac{\log \left(c_{1} \beta^{-1}\right)}{c_{2} N}\right)^{1 / \max \{n_x, 2\}} + L_x\varepsilon_x $
contains the unknown data-generating distribution $\mathbb{P}$ with confidence with confidence $1-\beta$ for some prescribed $\beta \in(0,1)$.
\end{theorem}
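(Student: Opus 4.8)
The plan is to split the target distance $d_w(\hat{\mathbb{P}}_{N_s}, \mathbb{P})$, where $\hat{\mathbb{P}}_{N_s}$ is the empirical measure of the $N$ estimated disturbances $\hat{w}_i$, into a stochastic concentration term and a deterministic shift term, and to recombine them via the triangle inequality for the Wasserstein metric. The key device is the auxiliary empirical measure $\hat{\mathbb{P}}_{N_s}^{w} := \frac{1}{N}\sum_{i=1}^{N}\delta_{w_i}$ built from the (unobserved) true disturbances $w_i$. By Assumption \ref{assump:iid} the $w_i$ are i.i.d.\ draws from $\mathbb{P}$, and by Assumption \ref{assump:light_tailed_distr} the law $\mathbb{P}$ is light-tailed, so the Fournier--Guillin measure-concentration estimate used in \cite{mohajerin2018data} applies directly to $\hat{\mathbb{P}}_{N_s}^{w}$.

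From that estimate I would extract the first summand of $\varepsilon_N(\beta)$. The concentration bound supplies constants $c_1, c_2$ (depending only on the exponent $a$, the light-tail constant, and $n_x$) such that $\mathbb{P}^{N}[\,d_w(\hat{\mathbb{P}}_{N_s}^{w}, \mathbb{P}) \ge \rho\,] \le c_1 \exp(-c_2 N \rho^{\max\{n_x,2\}})$ in the regime $\rho \le 1$. Equating the right-hand side to $\beta$ and solving for $\rho$ gives $\rho = \bigl(\log(c_1\beta^{-1})/(c_2 N)\bigr)^{1/\max\{n_x,2\}}$, and the hypothesis $N \ge \log(c_1\beta^{-1})/c_2$ is precisely what forces $\rho \le 1$, keeping the bound in its valid range. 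Hence with probability at least $1-\beta$ one has $d_w(\hat{\mathbb{P}}_{N_s}^{w}, \mathbb{P}) \le \bigl(\log(c_1\beta^{-1})/(c_2 N)\bigr)^{1/\max\{n_x,2\}}$.

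The deterministic shift term is controlled by Lemmas \ref{lemma:estimation_error} and \ref{lemma:Wass_dist_emp}. Pairing each estimated sample $\hat{w}_i$ with its true counterpart $w_i$ defines an admissible transport plan between $\hat{\mathbb{P}}_{N_s}$ and $\hat{\mathbb{P}}_{N_s}^{w}$ of cost $\frac{1}{N}\sum_{i=1}^{N}\|\hat{w}_i - w_i\|$; by Lemma \ref{lemma:estimation_error} together with $\|x_i\| \le \varepsilon_x$, every summand is at most $L_x\varepsilon_x$. This is exactly the configuration of Lemma \ref{lemma:Wass_dist_emp} with $\varepsilon_\xi = L_x\varepsilon_x$, so that $d_w(\hat{\mathbb{P}}_{N_s}, \hat{\mathbb{P}}_{N_s}^{w}) \le L_x\varepsilon_x$ holds deterministically. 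The triangle inequality then yields, on the probability-$(1-\beta)$ event, $d_w(\hat{\mathbb{P}}_{N_s}, \mathbb{P}) \le d_w(\hat{\mathbb{P}}_{N_s}, \hat{\mathbb{P}}_{N_s}^{w}) + d_w(\hat{\mathbb{P}}_{N_s}^{w}, \mathbb{P}) \le L_x\varepsilon_x + \bigl(\log(c_1\beta^{-1})/(c_2 N)\bigr)^{1/\max\{n_x,2\}} = \varepsilon_N(\beta)$, which is the assertion.

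The step I expect to be the main obstacle is conceptual rather than computational: the concentration inequality concerns the empirical measure of the true disturbances, which are never observed, whereas the ball in the statement is centered at the empirical measure of the estimated disturbances. Rigor therefore rests on verifying that the pairing $\hat{w}_i \leftrightarrow w_i$ is a genuine coupling with the correct marginals so that Lemma \ref{lemma:Wass_dist_emp} applies, and on keeping the stochastic concentration event and the deterministic shift bound cleanly separated so that the confidence level stays at $1-\beta$ rather than degrading. One should also confirm that absorbing the estimation bias through the additive constant $L_x\varepsilon_x$ outside the concentration radius, rather than by inflating $N$, is the intended and tightest route, and that Assumption \ref{assump:iid} secures independence of the $w_i$ even when the regressors $x_i$ generating them lie along a single trajectory.
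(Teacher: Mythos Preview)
Your proposal is correct and follows essentially the same route as the paper: invoke the measure-concentration result of \cite{mohajerin2018data} for the (unobserved) perfect empirical measure to obtain the first summand of $\varepsilon_N(\beta)$, then use Lemmas~\ref{lemma:estimation_error} and~\ref{lemma:Wass_dist_emp} to bound the deterministic Wasserstein shift by $L_x\varepsilon_x$ and enlarge the radius accordingly. The paper's proof is a terse sketch of exactly this argument; your version simply makes the triangle-inequality step and the coupling explicit.
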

\begin{proof}
    \update{By \cite[Th 3.4]{mohajerin2018data} we have the Wasserstein ball with radius $\varepsilon_N:=\left(\frac{\log \left(c_{1} \beta^{-1}\right)}{c_{2} N}\right)^{1 / \max \{n_x, 2\}}$ centered at the perfect empirical measure contains $\mathbb{P}$ with confidence $1-\beta$. However, the empirical measure in this work is estimated with the maximal error $L_x \varepsilon_x$ in the corresponding Wasserstein distance. Hence, by Lemma \ref{lemma:estimation_error} and Lemma \ref{lemma:Wass_dist_emp}, $\mathbb{P}$ is contained in the ambiguity set with an enlarged radius $L_x \varepsilon_x$. }
\end{proof}

\section{Recursive feasibility, Stability and Convergence}
\label{sec:proof}
We provide theoretical results about recursive feasibility and stability in this section.

\begin{theorem}[Recursive feasibility]
\label{th:recursive_feas}
If Assumptions 1-5 hold and the optimization problem \eqref{eq:opt_prob} is feasible at the initial time $k = 0$, then it is recursively feasible at all time $k \in \mathbb{N}_{1}^{\infty}$ with probability $1-\beta- \delta_h$. Also the chance constraints \eqref{eq:sys_constraints_state} and \eqref{eq:sys_constraints_input} are satisfied at each $k \in \mathbb{N}_{1}^{\infty}$.

\end{theorem}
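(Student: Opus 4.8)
The plan is to decouple the two distinct ways randomness enters the theorem. The tubes $\mathbb{D}$ and the tightening $\eta$ are constructed \emph{offline} from finite data, so they are only correct on a high-probability event; conditional on that event, recursive feasibility is a purely deterministic set-membership argument in the spirit of \cite[Thm. 8.1]{kouvaritakis2016model} and \cite{paulson2019mixed}. First I would isolate this ``good-data'' event. By Lemma \ref{lemma:finite_sample_modelling_error}, with confidence $1-\delta_h$ the hyper-cube $\bar{\mathbb{W}}$, and hence $\mathbb{D}$, is a valid outer bound for the realized modeling errors, so $d_{0\mid k}\in\mathbb{D}$ along the closed loop. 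By Theorem \ref{th:finite_sample}, with confidence $1-\beta$ the Wasserstein ball $\mathcal{P}_{N_s}$ contains the true disturbance law $\mathbb{P}_w$, so the DRO tightening $\eta$ obtained from Lemma \ref{lemma:cvar_DRO_reform} is valid for $\mathbb{P}_w$. A union bound makes both events hold simultaneously with probability at least $1-\beta-\delta_h$, and everything below is conditioned on this single offline event.

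Next I would run the standard shift-and-append construction. Suppose \eqref{eq:opt_prob} is feasible at time $k$ with minimizer $\{c^*_{i\mid k}\}$ and nominal trajectory $\{\bar{s}^*_{i\mid k}\}$. Since $e_{0\mid k}=0$ and the applied input is $u_{0\mid k}=Ks_k+c^*_{0\mid k}=\bar{u}^*_{0\mid k}$, the measured successor obeys $s_{k+1}=\bar{s}^*_{1\mid k}+d_{0\mid k}+Dw_{0\mid k}$. Taking the candidate $c_{i\mid k+1}=c^*_{i+1\mid k}$ for $i=0,\dots,N-2$ and $c_{N-1\mid k+1}=0$, initialized at $\bar{s}_{0\mid k+1}=s_{k+1}$, a one-line induction on \eqref{eq:pred_paradigm} yields the key identity $\bar{s}_{i\mid k+1}=\bar{s}^*_{i+1\mid k}+\Phi^{i}(d_{0\mid k}+Dw_{0\mid k})$ and $\bar{u}_{i\mid k+1}=\bar{u}^*_{i+1\mid k}+K\Phi^{i}(d_{0\mid k}+Dw_{0\mid k})$. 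Because $d_{0\mid k}\in\mathbb{D}$ and $w_{0\mid k}\in\mathbb{W}_w\subseteq\hat{\mathbb{W}}_w$, the perturbation lies in $\Phi^i\mathbb{D}\oplus\Phi^i D\hat{\mathbb{W}}_w$, which is precisely the amount by which $\bar{S}_{i+1}$ is tightened relative to $\bar{S}_i$ in \eqref{eq:state_constraints_tighten}; since $\bar{s}^*_{i+1\mid k}\in\bar{S}_{i+1}$, the definition of the Pontryagin difference gives $\bar{s}_{i\mid k+1}\in\bar{S}_i$, and the same telescoping against \eqref{eq:input_constraints_tighten} gives $\bar{u}_{i\mid k+1}\in\bar{U}_i$. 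For the terminal piece, $c_{N-1\mid k+1}=0$ produces $\bar{s}_{N\mid k+1}=\Phi\bar{s}^*_{N\mid k}+\Phi^{N}(d_{0\mid k}+Dw_{0\mid k})$, and the robust positive invariance of $\bar{S}_\infty$, absorbing $\Phi^N(\mathbb{D}\oplus D\hat{\mathbb{W}}_w)$, keeps it in $\bar{S}_\infty$. This certifies a feasible point of \eqref{eq:opt_prob} at $k+1$, hence recursive feasibility for all $k$.

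For the constraints, I would first establish the per-step bound $\inf_{\mathbb{P}\in\mathcal{P}_{N_s}}\mathbb{P}\{\sup_{d_{0\mid k}}[F]_jCs_{k+1}\le[f]_j\mid s_k\}\ge 1-[\alpha]_j$, which is exactly what the tightening $\eta$ from Lemma \ref{lemma:cvar_DRO_reform} together with the design $\bar{S}_1=\mathbb{S}_1\ominus\mathbb{D}$ encodes; on the good-data event $\mathbb{P}_w\in\mathcal{P}_{N_s}$, so the worst-case bound holds for the true law. The integration identity \eqref{eq:lift_state_constraint1_sufficient2} then upgrades this conditional guarantee to the unconditional chance constraint, and the relation $x=Cs$ transfers it verbatim to \eqref{eq:sys_constraints_state}. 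The hard input constraint \eqref{eq:sys_constraints_input} holds deterministically, since only $u_k=u_{0\mid k}=\bar{u}_{0\mid k}\in\bar{U}_0=\mathbb{U}$ is ever applied.

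The hard part will be the probabilistic bookkeeping rather than any single set computation. One must argue that a \emph{single} offline event suffices for all times $k$, so that no union bound over the horizon is incurred, and one must reconcile the worst-case-over-$\mathcal{P}_{N_s}$ CVaR statement with the actual closed-loop law, checking that the $\sup$ over modeling error and the $\inf$ over the ambiguity set compose in the correct order. The terminal step is also delicate: it relies on $\bar{S}_\infty$ being invariant with enough margin to absorb $\Phi^N(\mathbb{D}\oplus D\hat{\mathbb{W}}_w)$, and I would verify that the offline-constructed $\bar{S}_\infty$ genuinely possesses this property rather than merely being the limit of the tightened sets $\bar{S}_i$.
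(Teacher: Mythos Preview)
Your proposal is correct and follows essentially the same route as the paper: both isolate the high-probability offline event via Lemma~\ref{lemma:finite_sample_modelling_error} and Theorem~\ref{th:finite_sample} with a union bound, then run the standard shift-and-append candidate to obtain the identity $\bar{s}_{i\mid k+1}=\bar{s}^*_{i+1\mid k}+\Phi^{i}(d_k+D\hat{w}_k)$ and invoke the nested tightenings \eqref{eq:state_constraints_tighten}--\eqref{eq:input_constraints_tighten} together with the terminal RPI assumption $\Phi\bar{S}_\infty\oplus\Phi^N\mathbb{D}\oplus\Phi^N D\hat{\mathbb{W}}\subseteq\bar{S}_\infty$. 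Your framing of the single offline ``good-data'' event and your explicit treatment of the chance-constraint step via \eqref{eq:lift_state_constraint1_sufficient2} are in fact more carefully spelled out than in the paper's own proof, which states the chance-constraint conclusion without detailing it.
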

\begin{proof}
\update{
This proof follows the tube designs in Section \ref{sec:recursive_feas_design} and the proof in \cite[Theorem 1]{paulson2019mixed}.}

\update{
We first prove the recursive feasibility of the optimization problem \eqref{eq:opt_prob} of the lifted system, then we show the chance constraints of the original system can be satisfied.}

\update{
First, we assume that the optimization problem \eqref{eq:opt_prob} is solvable at time $k$; hence, we denote the optimal values of optimization problem as $\boldsymbol{\bar{s}_{k}}^{\star} = \left(s_{0 \mid k}^{\star}, s_{1 \mid k}^{\star}, \ldots, s_{N \mid k}^{\star}\right), \boldsymbol{\bar{u}_{k}}^{\star} = \left(u_{0 \mid k}^{\star}, u_{1 \mid k}^{\star}, \ldots, u_{N-1 \mid k}^{\star}\right),  \mathbf{c}_{k}^{\star}=\left(c_{0 \mid k}^{\star}, c_{1 \mid k}^{\star}, \ldots, c_{N-1 \mid k}^{\star}\right)$. The optimal nominal state profile and input profile are then given by
$$
\begin{array}{cc}
\bar{s}_{i+1 \mid k}^{\star} =& \Phi \bar{s}_{i \mid k}^{\star}+B c_{i \mid k}^{\star}, \quad \bar{s}_{0 \mid k}^{\star}=s_{k}\\
\bar{u}_{i \mid k}^{\star} =& K \bar{s}_{i\mid k}^{\star} + \bar{c}_{i\mid k}^{\star}
\end{array}
$$
We also denote $\mathbf{c}_{k+1}^{c}: = \left(c_{0 \mid k+1}^{c}, \ldots, c_{N-2 \mid k+1}^{c}, c_{N-1 \mid k+1}^{c}\right) =\left(c_{1 \mid k}^{\star}, \ldots,  c_{N-1 \mid k}^{\star}, 0 \right)$ as the candidate solution at time $k+1$ and let $u_{i \mid k+1}^{c}, i \geq 0$ and $s_{i \mid k+1}^{c}, i \geq 0$ denote the corresponding nominal input and state predictions with $s_{0 \mid k+1}^{c}=s_{k+1}$, respectively. Also, due to the selection of observables, we have $s_{k+1} = \Psi_{1:n}(f_d(x_k,u_k) + w_k)$. Hence, for the lifted system, we have the following relation for the state random variable at next time $k+1$:  $s_{k+1}=A s_{k}+B u_{k}+d_{k}+D \hat{w}_{k}$ for some $d_k \in \mathbb{W}_d$ and $u_{k}=\bar{u}_{0 \mid k}^{\star}=K s_{k} + c_{0 \mid k}^{\star}$ . We could therefore write $\bar{s}_{0 \mid k+1}^{c}$ in terms of $\bar{s}_{1 \mid k}^{\star}$:
$$
\label{eq:recursive_step0}
\begin{aligned}
\bar{s}_{0 \mid k+1}^{c} = s_{k+1}&=A s_{k}+B u_{k}+d_{k}+ D \hat{w}_{k}, \\
&=As_k + BKs_k + Bc^{\star}_{0\mid k} + d_k + D\hat{w}_k\\
&=(A+B K) s_{k}+Bc_{0 \mid k}^{\star}+d_{k}+D\hat{w}_{k}, \\
&=\bar{s}_{1 \mid k}^{\star}+d_{k}+D \hat{w}_{k}.
\end{aligned}
$$}

\update{
By definition above, the candidate nominal state with one step prediction is $\bar{s}_{1 \mid k+1}^{c}=\Phi \bar{s}_{0 \mid k+1}^{c}+B c_{0 \mid k+1}^{c}$. After substituting the one step predicted candidate nominal state by \eqref{eq:recursive_step0}, we can derive
$$
\begin{aligned}
\bar{s}_{1 \mid k+1}^{c} &= \Phi \bar{s}_{0\mid k+1}^{c} + B \bar{u}_{0 \mid k+1}^{c}\\
&=\Phi\left(\bar{s}_{1 \mid k}^{\star}+d_{k}+D \hat{w}_{k+1}\right)+B c_{1 \mid k}^{\star} \\
&=\left(\Phi \bar{s}_{1 \mid k}^{\star} + B c_{1 \mid k}^{\star}\right)+\Phi d_{k}+\Phi D\hat{w}_{k+1}, \\
&=\bar{s}_{2 \mid k}^{\star}+\Phi d_{k}+\Phi D \hat{w}_{k+1} .
\end{aligned}
$$
By induction, we will have the following relation for each candidate nominal states for the prediction step $i \in \mathbb{N}_{0}^{N-1}$:
$$
\bar{s}_{i \mid k+1}^{c}=\bar{s}_{i+1 \mid k}^{\star}+\Phi^{i} d_{k}+\Phi^{i} D\hat{w}_{k+1}, \quad \forall i \in \mathbb{N}_{0}^{N-1}
$$
Similarly,  the candidate nominal input $\bar{u}_{i \mid k+1}^{c}=K \bar{s}_{i \mid k+1}^{c}+c_{i \mid k+1}^{c}$ can be reformulated as 
$$
\bar{u}_{i \mid k+1}^{c}=\bar{u}_{i+1 \mid k}^{\star}+K \Phi^{i} d_{k}+K \Phi^{i} D \hat{w}_{k+1}, \quad \forall i \in \mathbb{N}_{0}^{N-1}.
$$
Based on the construction of reachable sets in \eqref{eq:state_constraints_tighten} and \eqref{eq:input_constraints_tighten} \updateNew{and due to Lemma \ref{lemma:finite_sample_modelling_error} and Theorem \ref{th:finite_sample},  with probability $1-\beta- \delta_h$,} we have 
$$
\begin{aligned}
&\bar{s}_{i+1 \mid k}^{\star} \in \bar{S}_{i+1} \Rightarrow \bar{s}_{i \mid k+1}^{c} \in \bar{S}_{i}, \quad \forall i \in \mathbb{N}_{1}^{N-1}, \\
&\bar{u}_{i+1 \mid k}^{\star} \in \bar{U}_{i+1} \Rightarrow \bar{u}_{i \mid k+1}^{c} \in \bar{U}_{i}, \quad \forall i \in \mathbb{N}_{0}^{N-2}.
\end{aligned}
$$
Next, we show that the tightened constraints are also met at the end of the horizon. Since we have $\bar{s}_{N \mid k}^{\star} \in \bar{S}_{N}$, it guarantees that $\bar{s}_{N-1 \mid k+1}^{c}=\bar{s}_{N \mid k}^{\star}+\Phi^{N-1} d_{k}+\Phi^{N-1} D \hat{w}_{k+1} \in \bar{S}_{N-1}$. Similarly, we have $\bar{u}_{N-1 \mid k+1}^{c} \in \bar{U}_{N-1}$.}

\update{
Lastly, we prove that the terminal constraint is satisfied at time $k+1$, i.e. $\bar{s}_{N \mid k+1}^{c} \in \bar{S}_{\infty}$. We formulate the terminal predicted nominal state at $k+1$ as
$$
\begin{aligned}
\bar{s}_{N \mid k+1}^{c}=& A \bar{s}_{N-1 \mid k+1}^{c}+B \bar{u}_{N-1 \mid k+1}^{c}, \\
=& A\left(\bar{s}_{N \mid k}^{\star}+\Phi^{N-1} d_{k}+\Phi^{N-1} D \hat{w}_{k+1}\right) \\
&+B\left(K \bar{s}_{N \mid k}^{\star}+K \Phi^{N-1} d_{k}+K \Phi^{N-1} D \hat{w}_{k+1}\right) \\
=&(A+B K)\left(\bar{s}_{N \mid k}^{\star}+\Phi^{N-1} d_{k}+\Phi^{N-1} D \hat{w}_{k+1}\right) \\
=& \Phi \bar{s}_{N \mid k}^{\star}+\Phi^{N} d_{k}+\Phi^{N} D \hat{w}_{k+1}
\end{aligned}
$$
in the terminal constraint due to the assumption that $$
\Phi  \bar{S}_{\infty} \oplus \Phi^{N} \mathbb{D} \oplus \Phi^{N} D   \mathbb{W} \subseteq \bar{S}_{\infty}.
$$}
\end{proof}

\begin{remark}
\update{
In the classical approach to guaranteeing recursive feasibility through constraint tightening, the design typically involves two steps (see, for example, \cite{paulson2019mixed}). The first step determines the tightening required to satisfy chance constraints, often formulated as a value-at-risk problem based on a given probability distribution of the uncertainty. In our case, this step is replaced by a CVaR-based distributionally robust optimization, which may fail to identify sufficient tightening with probability at most $\beta$; this happens when the true distribution lies outside the Wasserstein ball. The second step involves additional tightening to ensure recursive feasibility under the worst-case realization of uncertainty. Here, the support set is estimated in such a way that it may fail to outer approximate the true support with probability at most $\delta_h$. Combining both sources of uncertainty, recursive feasibility is guaranteed with probability at least $1-$ $\beta-\delta_h$.}
\end{remark}

After given the recursive feasibility guarantee, we state below the robustness results for the closed-loop nonlinear system. 

\begin{theorem}[Closed-loop Robustness]
\label{th:robustness}
If Assumptions 1-5 hold, then it holds with probability $1-\beta- \delta_h$ that: (a): $\lim _{k \rightarrow \infty} c_{0 \mid k}^{\star}=0$ (b): the state $s$, $x$ and the control $u$ of the closed-loop system \eqref{eq:system} with $u_k = K s_k + c_{0\mid k}^{*}$ are such that $s_{k} \rightarrow \left\{0\right\} \oplus R_{\infty}:=R_{\infty}^{e} \oplus R_{\infty}^{\varepsilon}$ and $x_{k} \rightarrow \left\{0\right\} \oplus CR_{\infty}$ as $k \rightarrow+\infty$
\end{theorem}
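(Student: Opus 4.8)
The plan is to use the optimal value $V(\mathbf{c}_{k}^{\star})$ as a Lyapunov function for the closed loop, exploiting the crucial feature of the Kouvaritakis parameterization that the cost depends only on the perturbation sequence $\mathbf{c}_{k}$ and not on the disturbance-corrupted state. Everything below is carried out conditional on the recursive-feasibility event of \Cref{th:recursive_feas}, which holds with probability $1-\beta-\delta_h$; this is exactly the probabilistic qualifier in the statement.

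For part (a), I would first invoke recursive feasibility so that the shifted candidate $\mathbf{c}_{k+1}^{c}=\left(c_{1\mid k}^{\star},\dots,c_{N-1\mid k}^{\star},0\right)$ is feasible at time $k+1$ for every $k$. Since $V$ is a sum of the terms $\|c_{i\mid k}\|_{\Pi}^{2}$ only, dropping the first term and appending a zero yields the exact identity $V(\mathbf{c}_{k+1}^{c})=V(\mathbf{c}_{k}^{\star})-\|c_{0\mid k}^{\star}\|_{\Pi}^{2}$. Optimality at $k+1$ then gives the descent inequality $V(\mathbf{c}_{k+1}^{\star})\le V(\mathbf{c}_{k}^{\star})-\|c_{0\mid k}^{\star}\|_{\Pi}^{2}$. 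As $V\ge 0$ is non-increasing it converges, and telescoping the differences gives $\sum_{k\ge 0}\|c_{0\mid k}^{\star}\|_{\Pi}^{2}\le V(\mathbf{c}_{0}^{\star})<\infty$, so $\|c_{0\mid k}^{\star}\|_{\Pi}\to 0$. Because $\Pi=R+B^{\top}PB\succ 0$ (from $P\succeq0$ and the standard $R\succ 0$), this forces $c_{0\mid k}^{\star}\to 0$, proving (a).

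For part (b), I would write the closed loop under $u_k=Ks_k+c_{0\mid k}^{\star}$ as $s_{k+1}=\Phi s_k+Bc_{0\mid k}^{\star}+v_k$ with $v_k:=d_k+Dw_k\in\mathbb{W}_d\oplus D\mathbb{W}_w$, and decompose $s_k=\tilde{s}_k+r_k$ by linearity, where $\tilde{s}_{k+1}=\Phi\tilde{s}_k+Bc_{0\mid k}^{\star}$ with $\tilde{s}_0=s_0$ and $r_{k+1}=\Phi r_k+v_k$ with $r_0=0$. Strict stability of $\Phi$ together with the vanishing forcing $Bc_{0\mid k}^{\star}\to0$ from part (a) yields $\tilde{s}_k\to0$, while $r_k=\sum_{j=0}^{k-1}\Phi^{j}v_{k-1-j}\in R_{k}^{\varepsilon}\oplus R_{k}^{e}\subseteq R_{\infty}^{\varepsilon}\oplus R_{\infty}^{e}=R_{\infty}$ for all $k$. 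Hence $\mathrm{dist}(s_k,R_{\infty})\le\|\tilde{s}_k\|\to0$, i.e. $s_k\to\{0\}\oplus R_{\infty}$; applying the linear output map $x=Cs$ and using closedness of $CR_{\infty}$ gives $x_k\to\{0\}\oplus CR_{\infty}$.

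The main obstacle I anticipate is the converging-input/converging-state step asserting $\tilde{s}_k\to0$: this is routine for a Schur-stable $\Phi$ with a vanishing input, but it must be argued explicitly via $\tilde{s}_k=\Phi^{k}s_0+\sum_{j=0}^{k-1}\Phi^{k-1-j}Bc_{0\mid j}^{\star}$, splitting the sum into an early block governed by $\|c_{0\mid j}^{\star}\|\to0$ and a tail governed by the geometric decay of $\|\Phi^{k-1-j}\|$. The second delicate point is bookkeeping on the probabilistic conditioning: every inequality above is valid only on the recursive-feasibility event, so the set-convergence conclusion inherits the confidence $1-\beta-\delta_h$ rather than holding almost surely.
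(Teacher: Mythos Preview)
Your proposal is correct and follows essentially the same approach as the paper: for (a) you use the shifted candidate $\mathbf{c}_{k+1}^{c}$ to obtain the descent inequality $V(\mathbf{c}_{k+1}^{\star})\le V(\mathbf{c}_{k}^{\star})-\|c_{0\mid k}^{\star}\|_{\Pi}^{2}$ and telescope, exactly as the paper does; for (b) your decomposition $s_k=\tilde{s}_k+r_k$ is simply a cleaner restatement of the paper's explicit formula $s_k=\Phi^{k}s_0+\sum_{j}\Phi^{j-1}Bc_{0\mid k-j}^{\star}+\sum_{j}\Phi^{j-1}(d_{k-j}+Dw_{k-j})$, and your converging-input/converging-state argument makes precise a step the paper leaves implicit.
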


\begin{proof}

 Based on Theorem \ref{th:finite_sample} and \ref{lemma:finite_sample_modelling_error}, we have at least probability $1-\beta- \delta_h$ to construct outer-approximated stochastic and deterministic tubes. Then, with probability $1-\beta- \delta_h$, the following statements hold.
 We first introduce the optimal cost function at sampling time $k$: $J_k \triangleq V^{\star}\left(x_k\right)=\sum_{i=0}^{N-1}\left\|c_{i \mid k}^{\star}\right\|_{\Pi}^2$. Then, we define the candidate control sequence $
\mathbf{c}_{k+1}^c=\left(c_{1 \mid k}^{\star}, \ldots, c_{N-1 \mid k}^{\star}, 0\right)$. The cost associated with the candidate control sequence is then:
$$
\begin{aligned}
V^c\left(x_{k+1}\right) & =\sum_{i=0}^{N-1}\left\|c_{i \mid k+1}^c\right\|_{\Pi}^2,\\
& =\sum_{i=1}^{N-1}\left\|c_{i \mid k}^{\star}\right\|_{\Pi}^2, \\
& =J_k-\left\|c_{0 \mid k}^{\star}\right\|_{\Pi}^2 ,
\end{aligned}
$$
where $c_{0 \mid k}^{\star}$ depends on the initial state. As the candidate is suboptimal; therefore, we have the optimal solution $J_{k+1} \leq V^c\left(x_{k+1}\right)$. By taking the difference between two optimal costs, we have $J_{k+1}-J_k \leq-\left\|c_{0 \mid k}^{\star}\right\|_{\Pi}^2$. As the costs $-\left\|c_{0 \mid k}^{\star}\right\|_{\Pi}^2$ are non-positive, the sequence of $\{J_i\}_{i=0}^{\infty}$ is non-increasing. If $J_0$ is bounded then the sequence converges to $J_{\infty}$. By summing the left-hand-side of the inequality from $k =0$ to $\infty$, we have $\infty>J_0-J_{\infty} \geq \sum_{k=0}^{\infty}\left\|c_{0 \mid k}^{\star}\right\|_{\Pi}^2 \geq 0$. This implies $\lim _{k \rightarrow \infty} c_{0 \mid k}^{\star \top} \Pi c_{0 \mid k}^{\star}=0$ as $\Pi$ is positive definite. This proves the first statement.

Next, as $s_{k+1} = \Phi s_k+B c_{0 \mid k}^{\star}+d_k+ Dw_{k}$, we have 
$$\begin{aligned} & \lim _{k \rightarrow \infty} s_k \\ & =\lim _{k \rightarrow \infty}\left[\Phi^k s_0+\sum_{j=1}^k \Phi^{j-1} B c_{0 \mid k-j}^{\star}+\sum_{j=1}^k \Phi^{j-1}\left(d_{k-j}+D w_{k-j}\right)\right] \\ & =\lim _{k \rightarrow \infty}\left[\sum_{j=1}^k \Phi^{j-1} d_{k-j}\right]+\lim _{k \rightarrow \infty}\left[\sum_{j=1}^k D w_{k-j}\right]\end{aligned}
$$
Then, from the definition of the reachable set, with the probability  $\min\{1-\beta, 1- \delta_h\}$,  we have 
 $$
\begin{aligned}
&\lim _{k \rightarrow \infty} s_{k} \in \lim _{k \rightarrow \infty}\left\{0\right\} \oplus R_{\infty}^{e} \oplus R_{\infty}^{\varepsilon}, \\
&\lim _{k \rightarrow \infty} u_{k} \in \lim _{k \rightarrow \infty}\left\{0\right\} \oplus K R_{\infty}^{e} \oplus K R_{\infty}^{\varepsilon}. 
\end{aligned}
$$

Also, given the design of observables, we have $x_k = C s_k$; hence,  $\lim _{k \rightarrow \infty} x_{k} \in \lim _{k \rightarrow \infty}\left\{0\right\} \oplus CR_{\infty}^{e} \oplus CR_{\infty}^{\varepsilon}$.


\end{proof}
\section{Numerical Example}
\label{sec:num_exam}
In this section, we test and illustrate the Koopman MPC framework we have proposed in \eqref{eq:opt_prob} on a disturbed mass-spring system and investigate the impact of data samples and ball radii on the constraint satisfaction applying the recursively feasible approach. For all the cases, the optimization problems are formulated with CVXPY \cite{diamond2016cvxpy}, and solved by Gurobi \cite{gurobi2021gurobi}.
All the experiments have been run on a laptop MacBook Pro 2019 with a 2.8 GHz Quad-Core Intel Core i7 CPU.

\subsection{Experiment 1}
This experiment discretizes the continuous time model with $\dot{x} = 
\begin{bmatrix} 
    x_1 \\
    -x_1^{3} - 1.5x_2 + 0.5u
\end{bmatrix}$
and the sampling period $\Delta T = 0.1$ $\,s$. The prediction horizon is set to $N=5$ and the additive disturbance of the discrete-time system is $w = \begin{bmatrix} w_1, w_2 \end{bmatrix}^{\top}$, where $w_1$ and $w_2$ comply with a uniform distribution between $-0.001$ and $0.001$ and between $-0.1$ and $0.1$, respectively.  We generate $50000$ training data points for the data set $\mathcal{D}$ by simulating 500 trajectories over 100 sampling periods. Each element of the initial states follows the distribution $\mathcal{N}(0,1.5)$ and each control action is randomly generated following the distribution $\mathcal{N}(-7,5)$, similar to \cite{korda2018linear}. The lifted system is constructed with the data-set $\mathcal{D}$ and observables  $\begin{bmatrix}
    1 & x_1 & x_2 & x_1^2 & x_2^2 & x_1^3 & x_2^3
\end{bmatrix}^{\top}.$ 

Another data-set $D_{w}$ containing $330$ data points is generated by taking the difference between two consecutive states with zero-valued input close to the equilibrium point to estimate the additive disturbances. The disturbance support set is determined from $\mathcal{D}_w$ and the corresponding chance constraints tightening is acquired by solving the corresponding distributionally robust optimization with the ball radius 1e-4.

\update{The ball radius is determined via cross-validation as described below.   Specifically, we test the sensitivity of backoff $\eta$ - the solution of \eqref{eq:CVaR_DRO_reform} - with respect to varying numbers of i.i.d. samples and ball radii. In Table \ref{table:uniform_CVAR_DRO}, \updateNew{we present the sensitivity $\eta$ for varying numbers of samples and ball radii. The values in Table \ref{table:uniform_CVAR_DRO} are obtained by solving Eq. \eqref{eq:CVaR_DRO_reform} with different sample sizes and ball radii, representing the tightening estimated with finite samples of disturbance.}
We observe that as the radius increases, $\eta$ increases towards a robust tightening. Furthermore, with the same radius, $\eta$ approaches to the true CVaR value (0.09) as the number of samples increases. Based on these observations, we select a ball radius of $10^{-4}$ for the distributionally robust optimization, as it yields a backoff that is slightly conservative but remains close to the true value when enough samples are available.
\begin{table}
\centering
\begin{tabular}{|c|c|c|c|c|c|c|c|}
\hline
 $\varepsilon$ & $10^{-1}$ & $10^{-2}$   &  $10^{-3}$ & $10^{-4}$ & $10^{-5}$  & $10^{-6}$  & $10^{-7}$   \\
 \hline
 5 samples & 0.1 &  0.1 & 0.01711 & 0.00811 & 0.00721 & 0.00712 &  0.00711  \\
 \hline
 10 samples & 0.1 & 0.1 & 0.09747 & 0.08847 & 0.08757 & 0.08748  & 0.08747    \\
 \hline
 100 samples  & 0.1 & 0.1 & 0.09747 & 0.08847 & 0.08757 & 0.08748 & 0.08747 \\
 \hline
 500 samples  & 0.1  & 0.1 &  0.09986 & 0.09086 & 0.08996 & 0.08987 & 0.08986\\
 \hline
 1000 samples  & 0.1 & 0.1 & 0.09998 & 0.09098 & 0.09008 & 0.08999 & 0.08998 \\
 \hline
\end{tabular}
\caption{\label{table:uniform_CVAR_DRO} Sensitivity test of $\eta$ in terms of different numbers of samples and ball radii. The true CVaR value of $\eta$ corresponding to the uniform distribution $U(-0.1,0.1)$  is $0.09$ \cite{rockafellar2002conditional}.}
\end{table}
}

A feedback gain $K$ and penalty matrix $\Pi$ are determined via the discrete-time LQR with $Q = \text{diag}(1,100,100,\allowbreak 0.1,0.1,0.1,0.1)$ and R = 0.1.  We require $\mathbb{P}\{x_2 \le 0.6 \} \ge 0.9$. State constraints are then tightened by the procedure we introduced in Section IV.A.; by solving linear optimization problems, i.e. \cite[Theorem 8.1]{kouvaritakis2016model}. With the designed objective function and cost functions, we solve the optimization problem \eqref{eq:opt_prob} repeatedly.

From Fig \ref{fig:simulation} we can see that the state chance constraint is satisfied and the closed-loop state could be robustly regulated to the equilibrium in expectation. Additionally, the computational effort for the prediction horizon $5$ is 19.9 ms ± 1.8 ms per loop.

\update{Furthermore, we compare our method with the method introduced in \cite{paulson2019mixed} with the perfect knowledge of the model. For the chance constraints satisfaction and recursive feasibility, we design the tightening with the linearized model around the origin and the feedback gain is designed with $Q =\text{diag}(100,100) $ and $R = 0.1$ for the linearized system. From Fig. \ref{fig:simulation_compare_SMPC} we can see that the resulting closed-loop trajectories are our method is slightly more conservative than the ones with the perfect model information. The conservativeness is  caused due to the estimation of the model.}

\update{It is important to note that the quality of closed-loop performance depends on the choice of observables. Poorly chosen observables can lead to an overestimation of the uncertain support set, resulting in overly conservative closed-loop behavior. For further discussion on the role of observables, we refer the interested reader to \cite{korda2020optimal}.}

\subsection{Experiment 2}
 In this subsection, we compared our method with the robust method proposed in \cite{zhang2022robust}. We consider the beta-distributed additive disturbances 
$$
\begin{bmatrix} 0.002(\operatorname{Beta}(100, 100)-0.5)\\ (\operatorname{Beta}(100, 100)-0.5)\end{bmatrix},
$$
which are bounded yet more concentrated around their mean value. \update{Here, Beta distribution $\text{Beta}(\alpha,\beta)$ is understood as a family of continuous probability distributions with their probability density function defined via $\frac{x^{\alpha-1}(1-x)^{\beta-1}}{B(\alpha, \beta)}, \quad 0<x<1$ with $B(\alpha, \beta)=\int_0^1 t^{\alpha-1}(1-t)^{\beta-1} d t$.} Applying a robust control method, such as the one proposed by \cite{zhang2022robust}, to nonlinear systems with these additive disturbances would leads to closed-loop trajectories that are further from the constraints, thereby resulting in a lower convergence rage. This occurs because the worst-case, low-probability disturbance realization is treated with the same weight as the more frequent one around the mean value. This observation also motivates us to partition the errors into deterministic modeling errors and stochastic errors, which arise from the additive disturbances.
With the same data-collection and control process for both our approach and the robust approach proposed in \cite{zhang2022robust}, we can read from Fig. \ref{fig:compare} that our method results in a less conservative closed-loop performance.

From Fig. \ref{fig:compare} we can see that both methods can guarantee the satisfaction of the expected and 90-percent chance constraints. However, the trajectories resulting from \cite{zhang2022robust}  are more conservative as the constraints are tightened with respect to the worst-case uncertainties. Since the additive disturbances in this comparison are concentrated around zero, tightening the constraints with respect to the worst-case scenario would result in more conservativeness. However, in our method, we only consider state chance constraints, which are satisfied with the help of a data-driven distributionally robust optimization. Also, since we want to guarantee recursive feasibility, the worst-case uncertainties will be considered only after the first step of prediction. This difference resembles the difference between robust MPC and stochastic MPC.

\begin{figure}[thpb]
    \centering
    \includegraphics[width=0.8\textwidth]{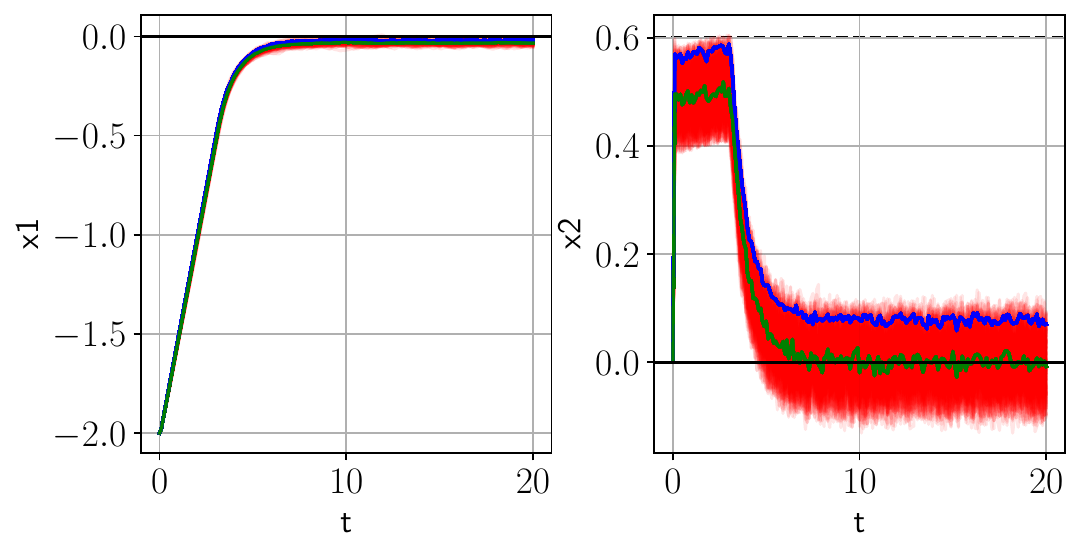}
    \caption{Red: 500 trajectories of different realizations. Green: Expected trajectory. Blue: 90-quantile trajectory of 500 realizations.}
    \label{fig:simulation}
\end{figure}

\begin{figure}[thpb]
    \centering
    \includegraphics[width=0.8\textwidth]{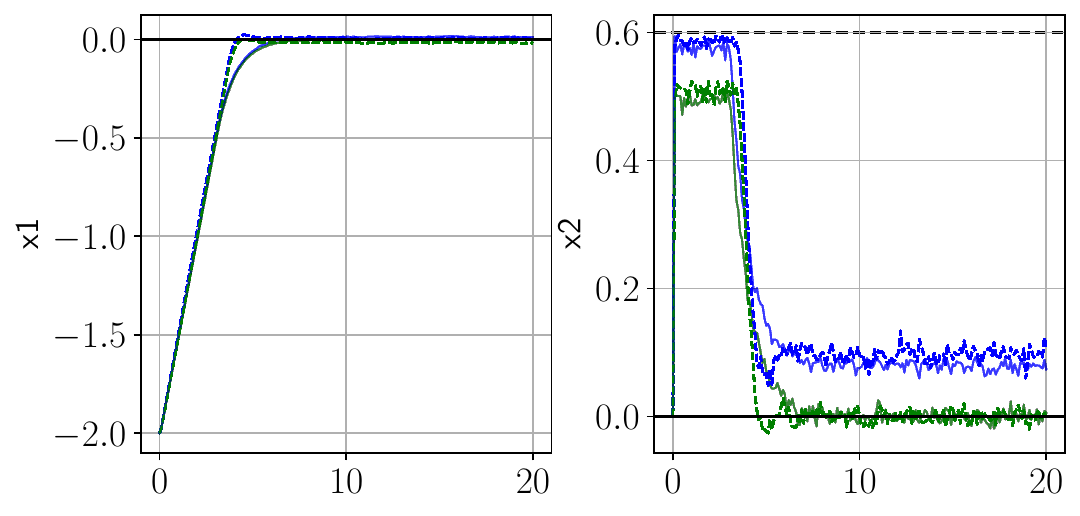}
    \caption{
    Solid light green: Expected trajectory using our method. Dashed dark green: Expected trajectory using method in \cite{paulson2019mixed}. Solid light Blue: 90-quantile trajectory of 500 realizations using our method. Dashed dark blue: 90-quantile trajectory of 500 realizations using method in \cite{paulson2019mixed}.}
    \label{fig:simulation_compare_SMPC}
\end{figure}

\begin{figure}[h]
    \centering
\includegraphics[width=0.8\textwidth]{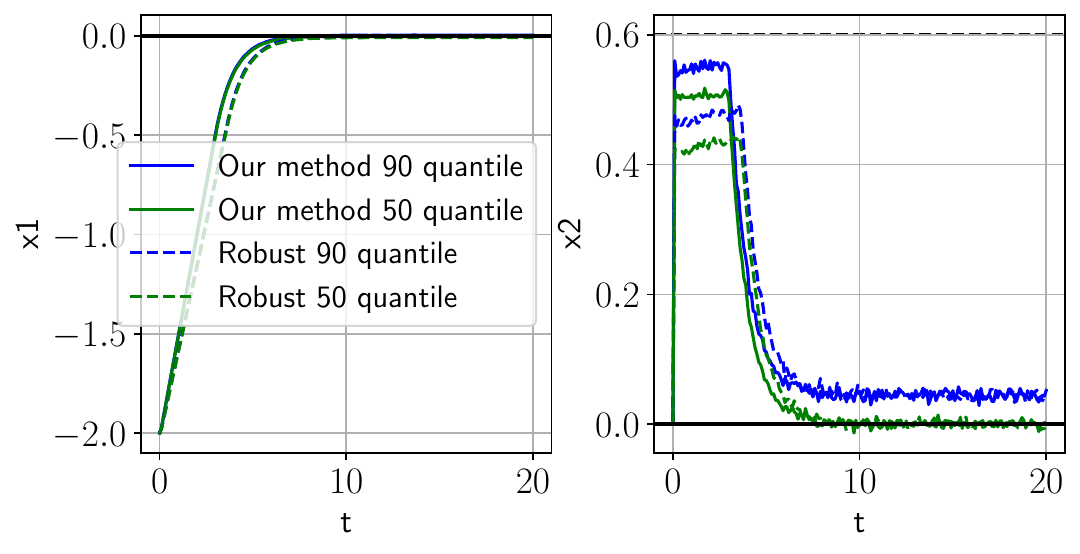}
    \caption{Comparison between the method proposed in our paper and in \cite{zhang2022robust}. Green indicates expected trajectory of 100 realizations with the same initial point and blue indicates the 90-quantile trajectory of 500 realizations.}
    \label{fig:compare}
\end{figure}

\section{CONCLUSION}
In this paper, we propose a novel data-driven Koopman MPC scheme for nonlinear systems. This scheme learns nonlinear systems in a lifted space with the help of the Koopman operator. A mixed-tubed-based MPC is proposed to control the lifted system and address the approximation error and stochastic disturbances such that the corresponding original system is regulated without violating the prescribed constraints. A constraint tightening is derived by solving a DRO problem with a Wasserstein ambiguity for chance constraints and a linear program for worse-case modeling errors. \update{This control scheme can be efficiently solved through a quadratic program with  guarantees on recursive feasiblility and closed-loop robustness in probability.} The effectiveness of the scheme is illustrated through a case study on a mass-spring system, and a less conservative closed-loop performance can be observed compared to a robust approach.

\bibliographystyle{unsrt}
\bibliography{manuscript}  

\begin{thebibliography}{10}

\bibitem{mayne2000constrained}
David~Q Mayne, James~B Rawlings, Christopher~V Rao, and Pierre~OM Scokaert.
\newblock Constrained model predictive control: Stability and optimality.
\newblock {\em Automatica}, 36(6):789--814, 2000.

\bibitem{grune2017nonlinear}
Lars Gr{\"u}ne and J{\"u}rgen Pannek.
\newblock Nonlinear model predictive control.
\newblock In {\em Nonlinear model predictive control}, pages 45--69. Springer,
  2017.

\bibitem{kaiser2021data}
Eurika Kaiser, J~Nathan Kutz, and Steven~L Brunton.
\newblock Data-driven discovery of koopman eigenfunctions for control.
\newblock {\em Machine Learning: Science and Technology}, 2(3):035023, 2021.

\bibitem{mezic2005spectral}
Igor Mezi{\'c}.
\newblock Spectral properties of dynamical systems, model reduction and
  decompositions.
\newblock {\em Nonlinear Dynamics}, 41(1):309--325, 2005.

\bibitem{budivsic2012applied}
Marko Budi{\v{s}}i{\'c}, Ryan Mohr, and Igor Mezi{\'c}.
\newblock Applied koopmanism.
\newblock {\em Chaos: An Interdisciplinary Journal of Nonlinear Science},
  22(4):047510, 2012.

\bibitem{wanner2022robust}
Mathias Wanner and Igor Mezic.
\newblock Robust approximation of the stochastic koopman operator.
\newblock {\em SIAM Journal on Applied Dynamical Systems}, 21(3):1930--1951,
  2022.

\bibitem{vcrnjaric2020koopman}
Nelida {\v{C}}rnjari{\'c}-{\v{Z}}ic, Senka Ma{\'c}e{\v{s}}i{\'c}, and Igor
  Mezi{\'c}.
\newblock Koopman operator spectrum for random dynamical systems.
\newblock {\em Journal of Nonlinear Science}, 30(5):2007--2056, 2020.

\bibitem{brunton2016koopman}
Steven~L Brunton, Bingni~W Brunton, Joshua~L Proctor, and J~Nathan Kutz.
\newblock Koopman invariant subspaces and finite linear representations of
  nonlinear dynamical systems for control.
\newblock {\em PloS one}, 11(2):e0150171, 2016.

\bibitem{schmid2010dynamic}
Peter~J Schmid.
\newblock Dynamic mode decomposition of numerical and experimental data.
\newblock {\em Journal of fluid mechanics}, 656:5--28, 2010.

\bibitem{arbabi2017ergodic}
Hassan Arbabi and Igor Mezic.
\newblock Ergodic theory, dynamic mode decomposition, and computation of
  spectral properties of the koopman operator.
\newblock {\em SIAM Journal on Applied Dynamical Systems}, 16(4):2096--2126,
  2017.

\bibitem{folkestad2020extended}
Carl Folkestad, Daniel Pastor, Igor Mezic, Ryan Mohr, Maria Fonoberova, and
  Joel Burdick.
\newblock Extended dynamic mode decomposition with learned koopman
  eigenfunctions for prediction and control.
\newblock In {\em 2020 american control conference (acc)}, pages 3906--3913.
  IEEE, 2020.

\bibitem{williams2015data}
Matthew~O Williams, Ioannis~G Kevrekidis, and Clarence~W Rowley.
\newblock A data--driven approximation of the koopman operator: Extending
  dynamic mode decomposition.
\newblock {\em Journal of Nonlinear Science}, 25(6):1307--1346, 2015.

\bibitem{williams2014kernel}
Matthew~O Williams, Clarence~W Rowley, and Ioannis~G Kevrekidis.
\newblock A kernel-based approach to data-driven koopman spectral analysis.
\newblock {\em arXiv preprint arXiv:1411.2260}, 2014.

\bibitem{korda2020optimal}
Milan Korda and Igor Mezi{\'c}.
\newblock Optimal construction of koopman eigenfunctions for prediction and
  control.
\newblock {\em IEEE Transactions on Automatic Control}, 65(12):5114--5129,
  2020.

\bibitem{khosravi2022representer}
Mohammad Khosravi.
\newblock Representer theorem for learning koopman operators.
\newblock {\em arXiv preprint arXiv:2208.01681}, 2022.

\bibitem{kostic2022learning}
Vladimir Kostic, Pietro Novelli, Andreas Maurer, Carlo Ciliberto, Lorenzo
  Rosasco, and Massimiliano Pontil.
\newblock Learning dynamical systems via koopman operator regression in
  reproducing kernel hilbert spaces.
\newblock {\em arXiv preprint arXiv:2205.14027}, 2022.

\bibitem{proctor2018generalizing}
Joshua~L Proctor, Steven~L Brunton, and J~Nathan Kutz.
\newblock Generalizing koopman theory to allow for inputs and control.
\newblock {\em SIAM Journal on Applied Dynamical Systems}, 17(1):909--930,
  2018.

\bibitem{korda2018linear}
Milan Korda and Igor Mezi{\'c}.
\newblock Linear predictors for nonlinear dynamical systems: Koopman operator
  meets model predictive control.
\newblock {\em Automatica}, 93:149--160, 2018.

\bibitem{surana2016koopman}
Amit Surana.
\newblock Koopman operator based observer synthesis for control-affine
  nonlinear systems.
\newblock In {\em 2016 IEEE 55th Conference on Decision and Control (CDC)},
  pages 6492--6499. IEEE, 2016.

\bibitem{peitz2020data}
Sebastian Peitz, Samuel~E Otto, and Clarence~W Rowley.
\newblock Data-driven model predictive control using interpolated koopman
  generators.
\newblock {\em SIAM Journal on Applied Dynamical Systems}, 19(3):2162--2193,
  2020.

\bibitem{zhang2022robust}
Xinglong Zhang, Wei Pan, Riccardo Scattolini, Shuyou Yu, and Xin Xu.
\newblock Robust tube-based model predictive control with koopman operators.
\newblock {\em Automatica}, 137:110114, 2022.

\bibitem{mamakoukas2022robust}
Giorgos Mamakoukas, Stefano Di~Cairano, and Abraham~P Vinod.
\newblock Robust model predictive control with data-driven koopman operators.
\newblock In {\em 2022 American Control Conference (ACC)}, pages 3885--3892.
  IEEE, 2022.

\bibitem{paulson2019mixed}
Joel~A Paulson, Tito~LM Santos, and Ali Mesbah.
\newblock Mixed stochastic-deterministic tube mpc for offset-free tracking in
  the presence of plant-model mismatch.
\newblock {\em Journal of process control}, 83:102--120, 2019.

\bibitem{mark2020stochastic}
Christoph Mark and Steven Liu.
\newblock Stochastic mpc with distributionally robust chance constraints.
\newblock {\em IFAC-PapersOnLine}, 53(2):7136--7141, 2020.

\bibitem{zhong2021data}
Zhengang Zhong, Ehecatl~Antonio del Rio-Chanona, and Panagiotis Petsagkourakis.
\newblock Data-driven distributionally robust mpc using the wasserstein metric.
\newblock {\em arXiv preprint arXiv:2105.08414}, 2021.

\bibitem{coulson2021distributionally}
Jeremy Coulson, John Lygeros, and Florian D{\"o}rfler.
\newblock Distributionally robust chance constrained data-enabled predictive
  control.
\newblock {\em IEEE Transactions on Automatic Control}, 67(7):3289--3304, 2021.

\bibitem{micheli2022data}
Francesco Micheli, Tyler Summers, and John Lygeros.
\newblock Data-driven distributionally robust mpc for systems with uncertain
  dynamics.
\newblock In {\em 2022 IEEE 61st Conference on Decision and Control (CDC)},
  pages 4788--4793. IEEE, 2022.

\bibitem{fochesato2022data}
Marta Fochesato and John Lygeros.
\newblock Data-driven distributionally robust bounds for stochastic model
  predictive control.
\newblock In {\em 2022 IEEE 61st Conference on Decision and Control (CDC)},
  pages 3611--3616. IEEE, 2022.

\bibitem{zhong2023tube}
Zhengang Zhong, Ehecatl~Antonio del Rio-Chanona, and Panagiotis Petsagkourakis.
\newblock Tube-based distributionally robust model predictive control for
  nonlinear process systems via linearization.
\newblock {\em Computers \& Chemical Engineering}, 170:108112, 2023.

\bibitem{mamakoukas2020learning}
Giorgos Mamakoukas, Ian Abraham, and Todd~D Murphey.
\newblock Learning stable models for prediction and control.
\newblock {\em arXiv preprint arXiv:2005.04291}, 2020.

\bibitem{kouvaritakis2016model}
Basil Kouvaritakis and Mark Cannon.
\newblock Model predictive control.
\newblock {\em Switzerland: Springer International Publishing}, 38, 2016.

\bibitem{hota2019data}
Ashish~R Hota, Ashish Cherukuri, and John Lygeros.
\newblock Data-driven chance constrained optimization under wasserstein
  ambiguity sets.
\newblock In {\em 2019 American Control Conference (ACC)}, pages 1501--1506.
  IEEE, 2019.

\bibitem{mohajerin2018data}
Peyman Mohajerin~Esfahani and Daniel Kuhn.
\newblock Data-driven distributionally robust optimization using the
  wasserstein metric: Performance guarantees and tractable reformulations.
\newblock {\em Mathematical Programming}, 171(1):115--166, 2018.

\bibitem{ambrosio2005gradient}
Luigi Ambrosio, Nicola Gigli, and Giuseppe Savar{\'e}.
\newblock {\em Gradient flows: in metric spaces and in the space of probability
  measures}.
\newblock Springer Science \& Business Media, 2005.

\bibitem{diamond2016cvxpy}
Steven Diamond and Stephen Boyd.
\newblock Cvxpy: A python-embedded modeling language for convex optimization.
\newblock {\em The Journal of Machine Learning Research}, 17(1):2909--2913,
  2016.

\bibitem{gurobi2021gurobi}
LLC Gurobi~Optimization.
\newblock Gurobi optimizer reference manual, 2021.

\bibitem{rockafellar2002conditional}
R~Tyrrell Rockafellar and Stanislav Uryasev.
\newblock Conditional value-at-risk for general loss distributions.
\newblock {\em Journal of banking \& finance}, 26(7):1443--1471, 2002.

\end{thebibliography}
\end{document}